\newtheorem{defi}{Definition}
\newtheorem{prop}{Proposition}
\newcommand{\E}{\mathbbm{E}}
\newtheorem{definition}{Definition}
\newtheorem{proposition}{Proposition}
\newtheorem{example}{Example}
\newcommand*{\addFileDependency}[1]{
  \typeout{(#1)}
  \@addtofilelist{#1}
  \IfFileExists{#1}{}{\typeout{No file #1.}}
}
\newcommand*{\myexternaldocument}[1]{%
    \externaldocument{#1}%
    \addFileDependency{#1.tex}%
    \addFileDependency{#1.aux}%
}
\title{Differentially private methods for \\ managing model uncertainty in  linear regression models}
\begin{document}

\begin{center}
    {\Large \textbf{Differentially private methods for managing \\  model uncertainty in  linear regression models}}  \\
     V\'ictor Pe\~na\footnote{Corresponding author: \texttt{victor.pena.pizarro@upc.edu}}, Universitat Polit\`ecnica de Catalunya \\
     Andr\'es Felipe Barrientos, Florida State University
\end{center}

\begin{abstract}
In this work, we propose differentially private methods for hypothesis testing, model averaging, and model selection for normal linear models. We consider Bayesian methods based on mixtures of $g$-priors and non-Bayesian methods based on likelihood-ratio statistics and information criteria. The procedures are asymptotically consistent and straightforward to implement with existing software. We focus on practical issues such as adjusting critical values so that hypothesis tests have adequate type I error rates and quantifying the uncertainty introduced by the privacy-ensuring mechanisms.  
\end{abstract}

\section{Introduction}

Differential privacy \citep{dwork2006calibrating} is a formal framework for quantifying the privacy of randomized algorithms. Its theoretical properties have been studied extensively (see e.g. \cite{dwork2014algorithmic}) and it has been adopted by companies such as Google, Apple, and Microsoft \citep{garfinkel2018issues} as well as institutions like the U.S. Census Bureau \citep{abowd2018us}. 

In this article, we develop differentially private methods for normal linear models. We propose differentially private hypothesis tests for comparing nested models (in Section~\ref{sec:DPBF}) as well as methods for model averaging and selection (in Section~\ref{sec:modunc}). We consider Bayesian methods based on mixtures of $g$-priors \citep{liang2008mixtures} and non-Bayesian methods that are built upon likelihood-ratio statistics and information criteria.

{In Bayesian hypothesis testing and model selection, prior distributions must be chosen carefully because their effect does not vanish as the sample size grows \citep{bayarri2012criteria}. Our work is based on mixtures of $g$-priors because, when combined with right-Haar priors on the common parameters, they satisfy a list of appealing criteria proposed in \cite{bayarri2012criteria} and they are conveniently implemented in the \texttt{R} package  \texttt{library(BAS)} \citep{BAS}.}

\textcolor{black}{From a non-Bayesian perspective, we work with likelihood-ratio tests and information criteria. We make this choice because of their theoretical properties, intuitive appeal, and ease of use. The class of information criteria we consider includes the Akaike Information Criterion (AIC; \cite{akaike1974new}) and the Bayesian Information Criterion (BIC; \cite{schwarz1978estimating}), among others. Information criteria have been proven useful in model averaging and selection problems for the normal linear model and beyond. We recommend the monograph \cite{claeskens2008model} for an extensive overview of the approach.}

We enforce differential privacy with well-established techniques. For hypothesis testing, we use the subsample and aggregate technique \citep{nissim2007smooth, smith2011privacy}, which consists in splitting the data into subgroups and releasing perturbed averages. \textcolor{black}{For model averaging and selection, we use \textit{sufficient-statistic perturbation}, which consists in releasing a noisy version of a sufficient statistic (see, for example, \cite{mcsherry2009differentially, vu2009differential} and \cite{bernstein2019differentially}).}

\color{black}

\subsection{Related Work}

There is a growing literature on differentially private methods for linear regression models. For example, \cite{amitai2018differentially} estimate quantiles and posterior tail probabilities of coefficients, \cite{barrientos2019differentially} test the significance of individual regression coefficients, and \cite{ferrando2022parametric} propose methods for point and interval estimation that can be applied to the normal linear model. \cite{lei2018differentially} consider the problem of model selection based on information criteria, but do not consider model averaging or Bayesian approaches, and \cite{bernstein2019differentially} propose a method for sampling from posterior distributions on regression coefficients, but do not consider hypothesis testing, model averaging or selection. 

Our methods for hypothesis testing involve data-splitting and censoring. Both operations can induce bias in the outputs. \cite{evans2020statistically} considers the effects of censoring in estimates that are asymptotically normal and proposes strategies to correct the bias induced by censoring. \cite{covington2021unbiased} uses bags of little bootstraps \citep{kleiner2012big} to provide unbiased estimates and valid confidence intervals. On the other hand, \cite{ferrando2022parametric} use the parametric bootstrap for bias correction. In hypothesis testing, the bias induced by data-splitting and censoring leads to inappropriate critical values for rejecting null hypotheses. We address this issue by simulating the distribution of the differentially private test statistics under the null hypothesis and find corrected critical values that are adequately calibrated. 

In general, our Bayesian methodology draws from the objective Bayesian literature for hypothesis testing, averaging, and selection, especially from \cite{liang2008mixtures} and \cite{bayarri2012criteria}. In these references, the classes of priors we consider here are proposed after showing that they satisfy a list of conceptually appealing criteria.
 
\subsection{Main Contributions}

In Section~\ref{sec:DPBF}, we argue that working on a logarithmic scale is natural for defining differentially private Bayes factors. We show that the methods are asymptotically consistent under regularity conditions that are similar to the ones needed for consistency when there are no privacy constraints. In Section~\ref{subsec:quant}, we provide a simple strategy to quantify the effects of the privacy-ensuring mechanisms. Then, in Section~\ref{subsec:cens}, we study the effects of censoring and data-splitting in our methodology.

In Section~\ref{sec:modunc}, we use sufficient-statistic perturbation to define methods for model averaging and selection. If we took a naive approach to the problem, the variance of the perturbation term would be exponential in the number of predictors. Sufficient-statistic perturbation allows us to release a statistic such that the variance of the perturbation term is quadratic in the number of predictors. The methods for model selection  are consistent under conditions that are similar to those needed for consistency without privacy constraints. In Section~\ref{subsec:hist}, we propose a strategy that quantifies the uncertainty introduced by the mechanisms that is analogous to the one pursued in Section~\ref{subsec:quant}.

From a practical point of view, we give guidelines for maximizing the statistical utility of the methods in finite samples. We illustrate the performance of our methods in Sections~\ref{subsec:hsb2} and~\ref{subsec:empeval}. We include additional results from the simulation studies in the Appendix.

The proofs to the Propositions stated in the main text are relegated to Appendix~\ref{suppl:section:Proofs}.

\color{black}

\subsection{Notation}

Extrema appear often in Section~\ref{sec:DPBF} because the methods are based on censored statistics. Our notation for them is $a \vee b = \max(a,b)$ and $a \wedge b = \min(a, b)$. The censored statistics are of the form $T^c = (T \vee a) \wedge b$: that is, $T^c = a$ whenever $T \le a$ and $T^c = b$ whenever $T \ge b$ (with the understanding that $a < b$).

We use the following notation for probability distributions. The $p$-variate normal distribution with mean $\mu$ and covariance matrix $\Sigma$ is $N_p(\mu, \Sigma)$, the Laplace distribution with location parameter $\mu$ and scale parameter $b$ is $L_1(\mu, b)$, and the  $(p \times p)$-dimensional Wishart distribution with degree of freedom $n > p-1$ and positive-definite scale matrix $V$ is  $W_{p}(n, V)$. 

In the case of matrices, all of them are assumed to be full-rank unless otherwise stated. Our notation for basic matrix operations and special matrices is as follows. The matrix transpose of $A$ is $A'$, the perpendicular projection operator onto the column space of $A$ is $P_A = A(A'A)^{-1} A$, and the upper Cholesky factor of $A$ is $A^{1/2}$. The $(n \times n)-$dimensional zero matrix is $0_{n \times n}$ and the $(n \times n)$-dimensional identity matrix is $I_n$. For vectors, the usual $p$-norm on $\mathbbm{R}^d$ is $\lVert \cdot \rVert_p$. The zero vector is $0_n = (0, 0, \, ... \, {}_{n)} , 0)'$ and the vector of ones is $1_n = (1, 1, \, ... \, {}_{n)} , 1)'$. The expectation of a random variable $X$ is  $\mathbb{E}(X)$.

\section{Brief Review of Differential Privacy} \label{sec:review}

Differential privacy \citep{dwork2006calibrating,dwork2014algorithmic} is a probabilistic property of randomized algorithms. Intuitively, differential privacy limits how much we can learn about individual entries in a data set. 

In the literature, randomized algorithms that ensure differential privacy are referred to as \textit{mechanisms}. Conceptually, mechanisms are functions $\mathcal{M}$ that take data $D$ as inputs and output a random $\mathcal{M}(D)$ that is, in some sense, private. In the definition of differential privacy, a key notion is that of \textit{neighboring datasets}. Two datasets $D$ and $\tilde{D}$ are neighbors, which is denoted by $D \sim \tilde{D}$, if they only differ in one row.

\begin{defi} 
[$(\varepsilon, \delta)$-differential privacy]
\label{def:DiffPriv} 
Let $\varepsilon > 0$ and $0 \le \delta \le 1$. A mechanism $\mathcal{M}$ satisfies $(\varepsilon,\delta)$-differential privacy if, for all $D \sim \tilde{D}$ any $\mathcal{M}$-measurable set $S$,
we have $P[\mathcal{M}(D) \in S] \leq \exp(\varepsilon) P[\mathcal{M}(\tilde{D}) \in S]+\delta.$
\end{defi}

When $\delta=0$, we retrieve $\varepsilon$-differential privacy, which is the most popular formalization of privacy in the literature. When $\varepsilon$ is small, the privacy of $\mathcal{M}$ increases; in such cases, the probability distributions of $\mathcal{M}(D)$ and $\mathcal{M}(\tilde{D})$ are forced to be similar, so the output of $\mathcal{M}$ is not very sensitive to small changes in $D$. On the other hand, as $\varepsilon$ increases, the distributions of $\mathcal{M}(D)$ and $\mathcal{M}(\tilde{D})$ are allowed to be more different, which decreases the privacy of the output.  When $0 < \delta \le 1$, Definition~\ref{def:DiffPriv} allows $\mathcal{M}(D)$ to release outputs that lead to a ``high privacy loss'' with probability $\delta$ (see Section 2 in \cite{dwork2014algorithmic} for details).

In this article, $\mathcal{M}(D)$ are perturbed versions of confidential statistics $T(D)$. The scale of the perturbation depends on the global sensitivity of $T(D)$, a concept we define below. 

\begin{defi}[Global sensitivity]
Let $T$ be a statistic mapping data to $\mathbbm{R}^d$. The global sensitivity of $T$ is defined as $\Delta_p = \sup_{D \sim \tilde{D}} \lVert T(D) - T(\tilde{D}) \rVert_{p}$.
\end{defi}

A key property of differential privacy is that transformations of $(\varepsilon, \delta)$-differentially private statistics are $(\varepsilon, \delta)$-differentially private. In the literature, this is known as the \textit{post-processing} property of differential privacy. We use this property frequently; for example, we use it to find posterior probabilities of hypotheses given Bayes factors.

\color{black}
\section{Brief Review of Bayes Factors and Information Criteria}  \label{sec:review2}

In this section, we review facts of Bayes factors and information criteria that are relevant for our purposes. This is not meant to be a comprehensive review; we refer the reader to \cite{berger2001objective}, \cite{liang2008mixtures}, and \cite{bayarri2012criteria} for further background on Bayesian methods and \cite{claeskens2008model} for information criteria.

\subsection{Hypothesis Testing}

In Section~\ref{sec:DPBF}, we introduce differentially private methods for hypothesis testing. We cover Bayesian approaches based on Bayes factors and non-Bayesian approaches based on likelihood-ratio tests and information criteria. Now, we review some core concepts in Bayesian and non-Bayesian testing that are helpful for contextualizing our work. 

Let $y = (y_1, \, ... \, , y_n)$ be a vector collecting independent and identically distributed observations from a statistical model with sampling density $f(y \mid \theta) = \prod_{i=1}^n f(y_i \mid \theta)$, for $\theta \in \Theta$. Our goal is testing $H_0: \theta \in \Theta_0$ against $H_1: \theta \in \Theta_1$, where $\Theta_0$ and $\Theta_1$ are disjoint subsets of $\Theta$.

In the Bayesian paradigm, all unknowns have probability distributions associated to them, including the hypotheses $H_0$ and $H_1$ and the parameter $\theta$. Before observing any data, the uncertainty in the hypotheses is reflected in the prior probabilities $P(H_0)$ and $P(H_1) = 1 - P(H_0)$. The uncertainty in $\theta$ is usually expressed conditionally through the prior distributions $\pi(\theta \mid H_0)$ and $\pi(\theta \mid H_1)$. Upon observing data, the uncertainty about $\theta$ and the hypotheses is updated in the posterior distribution, which is simply the conditional distribution of these unknowns given the data. 

A key quantity in Bayesian hypothesis testing is the Bayes factor, which we define now. Let $o_{10} = P(H_1)/P(H_0)$ be the prior odds of the hypotheses. Then, the Bayes factor of $H_1$ to $H_0$ is defined as
$$
B_{10} = {\int_{\theta \in \Theta_1} \pi( \theta \mid H_1) f(y \mid \theta)  \nu(\mathrm{d} \theta) }/{\int_{\theta \in \Theta_0} \pi( \theta \mid H_0) f(y \mid \theta)  \, \nu(\mathrm{d} \theta)}
$$
for a dominating measure $\nu(\cdot)$.

Bayes factors are important in Bayesian testing for several reasons. For example, the posterior probability of $H_1$ is
$$
P(H_1 \mid y) = {o_{10} B_{10}}/(1+ o_{10} B_{10}),
$$
which depends on the data only through the Bayes factor $B_{10}$. Bayes factors can be motivated as ratios of integrated likelihoods that quantify the evidence in favor of $H_1$ relative to $H_0$ \citep{berger1999integrated}, and there have been efforts to categorize the strength of evidence in favor or against $H_1$ based on the magnitude of $B_{10}$. Perhaps the most popular categorization is ``Jeffreys' scale of evidence'' (\cite{jeffreys1939}, see Table~\ref{tab:jeffreys}).

\begin{table}[ht]
\centering
\caption{Jeffreys' scale of evidence for Bayes factors \citep{jeffreys1939}}
\label{tab:jeffreys}
\begin{tabular}{ll}
\hline
Bayes factor &  Interpretation                  \\ \hline
$B_{10} < 1$                                & $H_0$ supported                                                \\
$1 < B_{10} < 10^{1/2}$                     & Evidence against $H_0$, but not worth more than a bare mention \\
$10^{1/2} < B_{10} < 10$                    & Evidence against $H_0$ substantial                             \\
$10 < B_{10} < 10^{3/2}$                    & Evidence against $H_0$ strong                                  \\
$10^{3/2} < B_{10} < 10^2$                  & Evidence against $H_0$ very strong                             \\
$B_{10} > 10^2$                             & Evidence against $H_0$ decisive                               
\end{tabular}
\end{table}

Logarithms of Bayes factors are naturally symmetric about zero. To see this, assume that $H_0$ and $H_1$ are equally likely \textit{a priori}. Then, the posterior probability of $H_1$ is the standard logistic function in $\log B_{10}$: that is, $P(H_1 \mid D) =  1/[1+\exp(- \log B_{10})]$. Changing the sign of $\log B_{10}$ leads to the complement $1-P(H_1 \mid D)$, and $P(H_1 \mid D) = 1/2$ if and only if $\log B_{10} = 0$. 

In Bayesian hypothesis testing, the priors  $\pi(\theta \mid H_0)$ and $\pi(\theta \mid H_1)$ must be chosen carefully. Vague priors on $\theta$, which are commonplace in estimation problems, can lead to posterior probabilities that overwhelmingly support $H_0$ no matter what the data are. This phenomenon is known in the literature as Lindley's paradox (see  \cite{lindley1957statistical} and \cite{robert2014jeffreys}). In normal linear regression problems, mixtures of $g$-priors have been studied carefully in \cite{liang2008mixtures} and \cite{bayarri2012criteria}. They have desirable properties such as invariance of Bayes factors with respect to changes of measurement units and large-sample consistency. In Sections~\ref{sec:DPBF} and \ref{sec:modunc}, we work with priors within this class.

From a non-Bayesian perspective, we propose using likelihood-ratio tests and information criteria. The likelihood ratio of $H_1$ to $H_0$ is defined as

$$
\Lambda_{10} = {\max_{\theta \in \Theta_1} f(y \mid \theta)}/{\max_{\theta \in \Theta_0} f(y \mid \theta)}.
$$

The likelihood ratio $\Lambda_{10}$ is similar to the Bayes factor $B_{10}$, but instead of averaging the likelihoods with weights given by $\pi(\theta \mid H_1)$ and $\pi(\theta \mid H_0)$, the likelihoods are maximized under $H_1$ and $H_0$. Likelihood ratio tests are standard within the field of statistics and their properties are well-characterized (see, for example, \cite{lehmann2006testing}). Under mild conditions, the transformed log-likelihood ratio $2 \log \Lambda_{10}$ is asymptotically distributed as chi-squared under $H_0$ and consistent under $H_1$. 

Finally, we define the class of information criteria
$$
I_{10} = n^{-\rho/2} \, \Lambda_{10},
$$
which encompasses AIC for $\rho = 2 p / \log n$, BIC for $\rho = p$, and the likelihood ratio statistic $\Lambda_{10}$ for $\rho = 0$. 

For a fixed $\rho$, the information criterion $I_{10}$ can be interpreted as a penalized likelihood ratio, where $\rho$ acts as a penalty for model complexity. For the normal linear model, likelihood-ratio tests and AIC fail to be consistent under $H_0$; BIC, on the other hand, is consistent under $H_0$. We refer the reader to Chapter 4 in the monograph \cite{claeskens2008model} for a more general version of this result and a discussion on related issues.

\subsection{Model Averaging and Selection}

In Section~\ref{sec:modunc}, we focus on model averaging and selection. The context is a regression problem where there is an outcome variable $Y \in \mathbb{R}^{n}$ and $p$ predictors collected in a design matrix $X \in \mathbb{R}^{n \times p}$. We do not know which variables in $X$, if any, should be included in our model for $Y$ given $X$. This type of uncertainty is often referred to as \textit{model uncertainty}. For an introduction on the topic with a strong Bayesian flavor, we recommend \cite{draper1995assessment}.

Model uncertainty can be parameterized through a binary vector $\gamma \in \{0,1\}^p$ that indicates active predictors: $\gamma_i = 0$ if the $i$th predictor is not active and $\gamma_i = 1$ if it is. Conceptually, we assume that there is a true model generating the data identified by $T \in \{0,1\}^p$. Given finite data $D$, we have uncertainty as to what $T$ is.  From a Bayesian perspective, we can put a prior on $\gamma$ and find posterior probabilities to quantify this uncertainty; from a non-Bayesian perspective, we can find a point estimate of $\gamma$ or average our uncertainty over it with rules inspired by Bayesian procedures. 

For each model, which we identify by its active predictors in $\gamma \in \{0, 1\}^p$, we can compute a Bayes factor or information criterion relative to the null model, which does not include any active predictors. We denote these null-based Bayes factors and information criteria $B_{\gamma0}$ and $I_{\gamma0}$, respectively. With these, we can perform model selection after maximizing over $\gamma$ or we can find model-averaged estimates with weights proportional to $B_{\gamma0}$ or $I_{\gamma 0}$.

\color{black}

\section{Hypothesis Testing} \label{sec:DPBF}

In this section, we describe differentially private methods for testing a null hypothesis $H_0$ against an alternative $H_1$. The methodology described here can be applied in general, but in Section~\ref{subsec:lm} we focus on nested linear regression models, for which we have derived theoretical guarantees. 

Our methods are based on the subsample and aggregate technique \citep{nissim2007smooth}. It consists in splitting the data into $M$ disjoint subgroups, computing statistics within the subgroups, and averaging the results. The output is made differentially private by adding a perturbation term $\eta$ whose variance is increasing in the global sensitivity of the statistics involved. 

We apply the subsample and aggregate technique as follows. First, we split the data into $M$ disjoint subgroups of sample size $b_1, b_2, \, ... \, , b_M$ ($\sum_{i=1}^M b_i = n$) and compute censored statistics $T_i^c = (T_i \vee L) \wedge U \in [L, U]$ for $i \in \{1, 2, \, ... \, , M\}$, where $T_i$ are raw statistics computed from confidential data. After censoring the $T_i$, we know that they have global sensitivity $\Delta = U - L$.  Finally, we release the noisy average $\sum_{i=1}^M T_i^c/M + \eta$, where $\eta$ is a random perturbation term that ensures $(\varepsilon, \delta)$-differential privacy. If $\delta =0$, then $\eta \sim L_1(0, \Delta/(M \varepsilon))$; if $0 < \delta \le 1$, then $\eta \sim {N}_1(0, a \Delta^2/(2 M \varepsilon))$, where $a$ is a constant that can be computed with Algorithm 1 in \cite{balle2018improving}.  

The confidential statistics $T_i$ we work with are logarithms of Bayes factors and logarithms of information criteria. We justify working on a logarithmic scale for Bayes factors in the subsequent paragraphs. A similar argument can be used to justify working with logarithms of information criteria as well.

Let $B_{10,i}^c$ be the censored Bayes factor in the $i$th subgroup. Naively, one could release the noisy average $\sum_{i=1}^M B_{10,i}^c/M + \eta.$ Unfortunately, that approach has undesirable properties. Under both the Laplace and analytic Gaussian mechanisms, $\eta$ is supported in $\mathbbm{R}$ and symmetric around zero, whereas $B_{10}$ is always non-negative and shows equal support to $H_0$ and $H_1$ when $B_{10} = 1$.  If there are no privacy constraints, Bayes factors satisfy $B_{01} = B_{10}^{-1}$, but in general $\sum_{i=1}^M B^c_{01,i}/M + \eta  \neq ( \sum_{i=1}^M B^c_{10,i}/M + \eta)^{-1}$. 

Alternatively, we propose working on a logarithmic scale, defining
$$
\log \tilde{B}_{10} = \sum_{i=1}^M \log B^c_{10,i}/M + \eta, \, \, \, \log B^c_{10, i} = (\log B_{10, i} \vee L) \wedge U.
$$
Logarithms of Bayes factors are supported on $\mathbbm{R}$ and, as we argued in Section~\ref{sec:review2}, they have a natural symmetry around zero, so it is sensible to add a zero mean perturbation term on that scale. After exponentiating, we obtain a geometric mean of censored Bayes factors with a multiplicative perturbation:

$$
\tilde{B}_{10} = \exp(\eta) \left( \prod_{i=1}^M B^c_{10,i}\right)^{1/M}.
$$

Since the distribution of $\eta$ is symmetric, $(\tilde{B}_{10})^{-1}$ is equal in distribution to $\tilde{B}_{01}$, and it is exactly equal to $\tilde{B}_{01}$ when $\eta = 0$. Geometric means of Bayes factors have appeared in the objective Bayesian literature in geometric intrinsic Bayes factors \citep{berger1996intrinsic}.  

The statistic $\tilde{B}_{10}$ is based on censored Bayes factors that are supported on $[L, U]$. However, the support of $\tilde{B}_{10}$ is not $[L, U]$ after introducing the perturbation term $\eta$.  This issue can be solved by censoring $\tilde{B}_{10}$, defining
\begin{equation} \label{eq:bstar}
B^\ast_{10} = (\tilde{B}_{10}
\vee L^\ast) \wedge U^\ast,
\end{equation}
where $L^\ast = \exp (L)$ and $U^\ast = \exp (U)$. With $B^\ast_{10}$, we can define  the differentially private posterior probability of $H_1$ given $D$ as $P^\ast(H_1 \mid D) = [1-P(H_0)] B^{\ast}_{10}/\{P(H_0) + [1-P(H_0)] B^\ast_{10}\}$. 

Following the same reasoning, we can define a differentially private information criterion 
\begin{align} \label{eq:istar}
I^\ast_{10} &= (\tilde{I}_{10} \vee L^\ast) \wedge U^\ast \\
\tilde{I}_{10} &= \exp(\eta) \left( \prod_{i=1}^M I_{10, i}^c \right)^{1/M} \nonumber \\ 
I_{10, i}^c &= ( b_i^{-\rho/2} \Lambda_{10, i} \vee  L ) \wedge U \nonumber.
\end{align}

\subsection{Nested Linear Regression Models} \label{subsec:lm}

Consider the normal linear model
$$
Y = X_0 \beta_0 + X \beta + \sigma W, \, \, \, W \sim N_n(0_n, I_n),
$$ 
where $X_0 \in \mathbbm{R}^{n \times p_0}$ and $X \in \mathbbm{R}^{n \times p}$ are full-rank and $n > p + p_0$. In this section, we present differentially private methods for testing $H_0: \beta = 0_p$ against $H_1: \beta \neq 0_p$. The set of predictors in $X_0$ is common to $H_0$ and $H_1$ and it can be, for instance, an intercept $1_n$. 

For the Bayesian methods, we define our priors after reparameterizing the model. We rewrite the model as $Y = X_0 \psi + V \beta + \sigma W$ for $V = (I_n - P_{X_0}) X$ and $\psi = \beta_0 + (X_0'X_0)^{-1} X_0'X \beta$. In this parameterization, the common predictors $X_0$ are orthogonal to $V$, which is specific to $H_1$. If $X_0$ is an intercept $1_n$, the reparameterization simply centers the predictors in $X$.

Our prior specification is
\begin{align*}
    \pi(\psi, \sigma^2) \propto 1/\sigma^2, \, \, \, 
    \pi(\beta \mid \sigma^2, H_1) = \int_{0}^\infty N_p(\beta \mid 0_p, g \sigma^2 (V'V)^{-1}) \, \pi(g) \, \nu( \mathrm{d} g).
\end{align*}
where $\nu$ is an appropriate dominating measure. We allow the prior measure on $g$ to depend on $n$ and $p$, but not on $Y$.

The prior on $(\psi, \sigma^2)$ is the right-Haar prior for this problem, which is improper, whereas $\beta \mid \sigma^2, H_1$ is a mixture of $g$-priors. This class of priors has strong theoretical support (see \cite{liang2008mixtures} and \cite{bayarri2012criteria} for details). For example, it leads to Bayes factors and posterior probabilities of hypotheses that are invariant with respect to invertible linear transformations of the design matrix $V$, such as changes of units.  This property would not be satisfied if we had chosen a diagonal covariance matrix for $\beta \mid g, \sigma^2, H_1$.   

The prior distribution for $\beta_0$ and $\sigma^2$ is improper, so the marginal distributions can only be defined up to arbitrary multiplicative constants. We use the same constants for both $H_0$ and $H_1$, a decision that is supported by the principle of invariance and other justifications discussed in \cite{berger1998bayes} and \cite{bayarri2012criteria}.

When there data are not confidential, we can report the Bayes factor \citep{liang2008mixtures}: 
\begin{equation} \label{eq:bf}
B_{10} = \int_0^\infty (g+1)^{(n-p-p_0)/2} [1 + g(1 - R^2)]^{-(n-p_0)/2} \, \pi(g) \,  \nu(\mathrm{d} g),
\end{equation}
where $R^2 = Y' P_V Y/Y'(I_n -  P_{X_0})Y$ or, from a non-Bayesian perspective, the information criterion
\begin{equation} \label{eq:info}
    I_{10} = n^{-\rho/2} \Lambda_{10} = n^{-\rho/2} (1-R^2)^{-n/2}.
\end{equation}

If there are privacy constraints, $B_{10}$ or $I_{10}$ cannot be released directly. We propose applying the subsample and aggregate technique to release differentially private versions of $B_{10}$ and $I_{10}$. That is, we split up the data set into $M$ disjoint subgroups of sample sizes $b_1, b_2, \, ... \, , b_M$ ($\sum_{i=1}^M b_i = n$), define priors $\pi_i(g_i)$ and penalties $\rho_i$ for $i \in \{1,2, \, ... \, , M\}$, and release $B^{\ast}_{10}$ and $I^{\ast}_{10}$ as defined in Equations~\eqref{eq:bstar} and~\eqref{eq:istar}, respectively.

Proposition~\ref{prop:consDPBF} below states that $B^{\ast}_{10}$  and $I^{\ast}_{10}$ are consistent under certain regularity conditions. Consistency does not follow from \cite{liang2008mixtures} for various reasons. Firstly, there is a growing number of subgroups and the sample sizes within the subgroups grow to infinity. Another difference is that the Bayes factors are censored and there is a perturbation term $\eta$ that is introduced to ensure differential privacy. Our result does not follow directly from \cite{smith2011privacy}, either. \cite{smith2011privacy} studies the asymptotic behavior of averages $\sum_{i=1}^M T_i/M$ released with the subsample and aggregate method. If $T^\ast_i$ is the ``true value'' of $T_i$, Lemma 6 in \cite{smith2011privacy} assumes that $\sqrt{b_i} (T_i - T^\ast_i)$ is asymptotically normal, $\mathbb{E}(T_i) - T^\ast_i \in O(1/b_i)$, and $\mathbb{E}\{[\sqrt{b_i} (T_i - T^\ast_i)]^3\} \in O(1).$ Furthermore, the results in \cite{smith2011privacy} are for independent and identically distributed data. In our case, it is unclear whether the asymptotic conditions are satisfied (if they are, it would require proof), and we would need additional assumptions on, at least, the design matrices, the priors $\pi_i(g_i)$, and the penalties $\rho_i$. Instead of verifying the conditions in \cite{smith2011privacy}, our proof uses a union bound and tail inequalities for $R^2$.

\begin{prop} \label{prop:consDPBF} Under the regularity conditions listed below, $B^{\ast}_{10}$ and $I^\ast_{10}$ are consistent: $B^\ast_{10}\rightarrow_P 0$ and $I^{\ast}_{10} \rightarrow_P 0$ when $H_0$ is true and $(B^{\ast}_{10})^{-1} \rightarrow_P 0$ and $(I^{\ast}_{10})^{-1} \rightarrow_P 0$ when $H_1$ is true.
\begin{enumerate}
    \item \textbf{Well-specified model:} The raw confidential data are generated from the normal linear model described in this section.
    \item \textbf{Growth of $M$ and $b_i$:} $\lim_{n \rightarrow \infty} M = \infty$ and $\lim_{n \rightarrow \infty} \inf_{i \in 1:M} b_i = \infty$. Moreover, if $p \ge 2$,   $\lim_{n \rightarrow \infty} \sup_{i \in 1:M} M/ b_i = 0$ and, if $p = 1$,  $\lim_{n \rightarrow \infty} \sup_{i \in 1:M} M \sqrt{\log b_i/ b_i} = 0$.
    \item  \textbf{Censoring limits:} $\lim_{n \rightarrow \infty} L =  - \infty$, $\lim_{n \rightarrow \infty} U = \infty$.
    \item \textbf{Privacy parameters:} The privacy parameters $\varepsilon$ and $\delta$ are such that $\lim_{n \rightarrow \infty} (U-L)/(M \varepsilon) = \lim_{n \rightarrow \infty} a(U-L)^2/(M \varepsilon) = 0$. 
    \item \textbf{Design matrices:} Under $H_1$,  $\lim_{n \rightarrow \infty} \inf_{i \in 1:M} \beta_T' X_i' X_i \beta_T/(\sigma^2_T b_i)  > 0$, where $\beta_T$ and $\sigma^2_T$ are the fixed true values of $\beta$ and $\sigma^2$, respectively. 
    \item \textbf{Priors on $g_i$ and penalties $\rho_i$:}  
\begin{align*}
 &\lim_{n \rightarrow \infty} \sup_{i \in 1:M} \int_{0}^\infty  b_i^{p/2} \, (g_i+1)^{-p/2} \pi_i(g_i) \, \nu( \mathrm{d} g_i ) < \infty, \, \, \rho_i \le \max(p,  \log b_i)  \\
&\lim_{n \rightarrow\infty} \inf_{i \in 1:M}  \int_{b_i}^\infty  b_i^{p/2}  (g_i+1)^{-p/2}  \pi_{i}(g_i) \, \nu(\mathrm{d} g_i) > 0, \, \,  \rho_i \ge p .
\end{align*}
\end{enumerate}
\end{prop}

Assumption 1 states that the model for the confidential data is well-specified. We explicitly state this assumption because it is not necessarily satisfied in Section~\ref{sec:modunc}. Assumption 2 implies that both the number of subgroups $M$ and the sample sizes within the subgroups $b_i$ grow to infinity, and it imposes conditions on the growth of $M$ relative to the growth of $b_i$. Assumption 3 assumes that the censoring limits increase with the sample size. Assumption 4 is related to the privacy parameters $\varepsilon$ and $\delta$. It holds for fixed $\varepsilon$ and $\delta$, but it also includes cases where $\varepsilon$ and $\delta$ depend on $n$. However, the dependency has to be such that, ultimately, $\eta$ converges to zero in probability: the privacy parameters cannot decrease too fast, relative to the censoring limits.  Assumption 5 is a regularity condition on design matrices that is commonly made in the literature (see Equation (22) in \cite{liang2008mixtures} or Equation (17) in \cite{bayarri2012criteria}). Finally, when combined with the previous assumptions, Assumption 6 is sufficient for establishing consistency. It imposes conditions on the prior on $g_i$ or, from a non-Bayesian perspective, the model complexity penalties $\rho_i$. Priors such as Zellner's $g$-prior with $g_i = b_i$, the robust prior proposed in \cite{bayarri2012criteria}, and Zellner-Siow \citep{zellner1980posterior} satisfy the conditions. The differentially private version of BIC satisfies the conditions on $\rho_i$, but AIC and the likelihood ratio statistic do not. When there are no privacy constraints, both AIC and the likelihood ratio statistic fail to be consistent under $H_0$ \citep{claeskens2008model}.  

Data-splitting is a commonly used strategy for analyzing big data (see, for instance, \citet{chen2021divide} for an up-to-date review on the topic). Proposition~\ref{prop:consDPBF} establishes consistency of mixtures of $g$-priors and information criteria if the data are split into independent subgroups, with or without privacy constraints (the latter corresponds to the case where $\varepsilon \rightarrow \infty$). To the best of our knowledge, there are no equivalent results in the literature. 

The Bayes factors and information criteria we work with are misspecified, since the observed values are perturbed averages of censored statistics. Therefore, Proposition~\ref{prop:consDPBF} can be of interest to those studying the properties of Bayesian hypothesis testing and selection in misspecified models, in the spirit of \cite{rossell2019additive}.

\subsection{Quantifying the Uncertainty Introduced by the Mechanism} \label{subsec:quant}

Given a differentially private Bayes factor, likelihood ratio or information criterion, we can find a confidence interval that quantifies the uncertainty introduced by the privacy-ensuring mechanism.

Let $\tilde{T} = \sum_{i=1}^n T_i^c/M + \eta$ be a differentially private statistic, and let $\eta_{1-\alpha/2}$ be the $(1-\alpha/2)$-th quantile of the perturbation term $\eta$. Then, $\mathcal{\tilde{C}}_\alpha = \tilde{T} \pm \eta_{1-\alpha/2} = [\tilde{l}_{\alpha}, \tilde{u}_{\alpha}]$ is a $1-\alpha$ confidence interval for $\sum_{i=1}^n T_i^c/M$. The interval can be shortened because $ \sum_{i=1}^n T_i^c/M \in [L, U]$, so $\mathcal{C}_{1-\alpha} = [ (\tilde{l}_{\alpha} \vee L) \wedge U, (\tilde{u}_{\alpha} \vee L) \wedge U]$ is also a $1-\alpha$ confidence interval for $\sum_{i=1}^n T_i^c/M$. 

Confidence intervals for transformations $f(\tilde{T})$ can be found by applying $f$ to the the endpoints of $\mathcal{C}_{1-\alpha}$. For Bayes factors of $H_1$ to $H_0$, the transformation is $f(x) = \exp(x)$; for posterior probabilities of $H_1$, it is $f(x) = [1-P(H_0)] \exp(x)/\{ P(H_0) + [1-P(H_0)] \exp(x)\}$.

\color{black}
\subsection{Effects of Censoring and Data-splitting} \label{subsec:cens}

The subsample and aggregate technique requires the specification of censoring limits $L < U$ and a number of subgroups $M$. These choices affect the performance of the methods, which is well-acknowledged in the literature (see \cite{evans2020statistically}, \cite{covington2021unbiased}, and \cite{ferrando2022parametric}). These articles focus on bias correction to obtain unbiased point estimates and valid confidence intervals. Our context is different: since we are concerned with hypothesis testing, the bias correction takes the form of correcting critical values to ensure appropriate type I errors. 

Differentially private likelihood ratios $\Lambda^\ast_{10}$ are not distributed as their non-private counterparts. Therefore, the critical values for rejecting $H_0$ with $\Lambda^\ast_{10}$ must be bias-corrected to guarantee adequate type I error rates. 

We can find corrected critical values for $\Lambda^\ast_{10}$ via simulation. In the linear regression model, $\Lambda^\ast_{10}$ depends on the data only through $R^2_i$, which under $H_0$ are distributed as beta with shape parameters $p/2$ and $(b_i-p-p_0)/2$. We can repeatedly simulate $R^2_i$ under $H_0$, compute  $\Lambda^\ast_{10}$, and estimate the correct critical value with empirical quantiles. In contexts other than linear regression models, the adjustments can be done approximately since, under mild conditions, Wilks' theorem establishes that the transformed likelihood ratio statistic $2 \log \Lambda_{10}$ is asymptotically distributed as chi-squared.

In the remainder of this section, we quantify the effects of censoring and splitting the data in terms of bias-variance trade-offs. In Appendix~\ref{suppl:hsb2cens}, we present the results of a simulation study that aims to quantify these trade-offs. Our conclusion is consistent with what we report here.

Given the symmetric nature of logarithms of Bayes factors (explained in Section~\ref{sec:review2}), we specify symmetric censoring limits $L = -C$ and $U = C$ for $C > 0$. Information criteria with penalties as specified in Proposition~\ref{prop:consDPBF} (a prime example being BIC) are consistent under $H_0$ and $H_1$, so it makes sense to specify symmetric censoring limits for their logarithms as well. On the other hand, log-likelihood ratios are non-negative for nested models (which is the case we consider in Section~\ref{sec:DPBF}), so $L = 0$ is a natural lower bound for them. In what follows, we assume that the confidential statistics are censored with the limits we have described in this paragraph.

Let $\tilde{T} = \sum_{i =1}^n T_i^c/M + \eta$ be an $\varepsilon$-differentially private statistic released by the subsample and aggregate technique (similar arguments can be made for $(\varepsilon, \delta)$-differentially private statistics with $\delta > 0$). Then,
$$
 \mathrm{Var}(\tilde{T}) = \left[ \sum_{i=1}^n \mathrm{Var}(T_i^c) + 2 {\Delta^2}/{\varepsilon^2} \right]/M^2.
$$
The variance of $\tilde{T}$ is clearly decreasing in the number of subgroups $M$, and it is increasing in the sensitivity $\Delta = U-L$ because $\mathrm{Var}(T_i^c)$ is increasing in $\Delta$.

The bias of the differentially private $\tilde{T}$ with respect to $\mathcal{T}$, which is the statistic we would obtain without censoring or splitting the data, can be decomposed into the bias induced by censoring plus the bias induced by splitting the data: 
\begin{align*}
\underbrace{\E (\tilde{T}  - \mathcal{T})}_{\text{total bias}} &=  \underbrace{\mathbbm{E} \left\{ \sum_{i=1}^M T_i^c/M - \sum_{i=1}^M T_i/M \right\}}_{\text{censoring bias}} + \underbrace{\mathbbm{E}\left\{\sum_{i=1}^M T_i/M - \mathcal{T} \right\}}_{\text{data-splitting bias}}.
\end{align*}

The censoring bias is decreasing in the sensitivity $\Delta$ and the number of subgroups $M$. In the case of logarithms of Bayes factors and information criteria, where the censoring is symmetric about zero, the censoring bias is 
$$
\frac{1}{M}\sum_{i=1}^M P(T_i \le -C) \left[-C - \E(T_i \mid T_i \le -C) \right] + P(T_i \ge C) \left[C - \E(T_i \mid T_i \ge C)\right].
$$
The censoring bias would be zero if the $T_i$ were symmetric about zero. However, the distributions of logarithms of Bayes factors and information criteria tend to be asymmetric because they are consistent: under $H_0$, logarithms of Bayes factors and information criteria tend to be negative, favoring the null; under $H_1$, they tend to be positive, favoring the alternative. If the distribution of the $T_i$ is not symmetric about zero, the censoring bias decreases in $\Delta = 2C$ and $M$. In the case of log-likelihood ratio statistics, the censoring bias is
$$
\frac{1}{M} \sum_{i=1}^M P(T_i \ge U)[U - \E(T_i \mid T_i \ge U)],
$$
which is decreasing in the upper censoring limit $U$ and the number of subgroups $M$. 

The data-splitting bias is harder to characterize generally but, through examples and simulations, we observe that it is typically increasing in the number of subsets. In the case that the subgroups are balanced ($b_i = b$ for all $i \in \{1, 2, \, ... \, , M\}$) the data-splitting bias is
$$
\mathbbm{E}\left\{\sum_{i=1}^M T_i/M - \mathcal{T} \right\}  =  \E(T_i - \mathcal{T}),
$$
where $T_i$ is the same statistic as $\mathcal{T}$ but computed with fewer observations. If $T_i$ and $\mathcal{T}$ were functions like the sample mean, this term would be zero, but for our statistics, this bias will be, in general, non-zero. Intuitively, if the number of subgroups is large and $T_i$ is based on a smaller sample, the bias will tend to increase. This phenomenon occurs empirically in Section~\ref{subsec:hsb2} and Appendix~\ref{suppl:hsb2cens}, where the bias increases as the number of subgroups increase. For a more analytical argument, the example below considers the case where $T_i$ and $\mathcal{T}$ are log-likelihood ratios. In this case, we are able to find the data-splitting bias under the null hypothesis. This example is relevant because both information criteria and Bayes factors depend on the data only though log-likelihood ratios.

\begin{example}  Assume the subgroups are balanced ($b_i = b$ for all $i \in \{1, 2, \, ... \, M\}$), so we have
$$
T_i = \log \Lambda_{10, i} = -b \log(1- R^2_i)/2, \qquad  \mathcal{T} = \log \Lambda_{10} = -n \log(1- R^2)/2. 
$$
Under $H_0$, $1-R^2_i \sim \mathrm{Beta}((b-p-p_0)/2, p/2)$ and $1-R^2 \sim \mathrm{Beta}((n-p-p_0)/2, p/2)$. Using properties of the beta distribution, the bias under $H_0$ is 
$$
\E(T_i - \mathcal{T}) = \frac{b}{2} \left[ \psi\left( \frac{b-p_0}{2} \right) - \psi\left( \frac{b-p-p_0}{2} \right) \right] - \frac{n}{2} \left[ \psi\left( \frac{n-p_0}{2} \right) - \psi\left( \frac{n-p-p_0}{2} \right) \right],
$$
where $\psi$ is the digamma function \citep{abramowitz1964handbook}. Using properties of the digamma function, it is straightforward to see that the bias is positive and increasing as the number of subgroups increases (i.e., as $b$ decreases). The bias is zero, as it should be, when there is no data-splitting and $b = n$.
\end{example}

\noindent In conclusion, we observe two bias-variance tradeoffs:
\begin{enumerate}
    \item \textbf{Censoring:} Increasing the sensitivity $\Delta = U - L$ decreases the bias but increases the variance of the differentially private statistic.
    \item \textbf{Data-splitting:} Increasing the number of subgroups $M$ decreases the variance of the differentially private statistic. The relationship between $M$ and the total bias is slightly more complicated: increasing the number of subgroups decreases the censoring bias, but it increases the data-splitting bias. In our empirical evaluations, we systematically observe that the total bias of the statistic is increasing in $M$.
\end{enumerate}





\color{black}

\subsection{Application: High School and Beyond Survey}
\label{subsec:hsb2}

We analyze a random sample of 200 students from the High School and Beyond survey, which was conducted by the National Center of Education Statistics. We obtained the data from \cite{diez2012openintro}. In \texttt{R}, they are available as \texttt{data(hsb2)} in \texttt{library(openintro)}. 

We consider two hypothesis tests. Both have \texttt{math} scores as their outcome variable $Y$. In the first one, we test if the variable \texttt{gender} (which, in this data set, can take on the values \texttt{male} or \texttt{female}) is predictive of \texttt{math} scores. Under the null hypothesis ($H_{01}$), the model contains only an intercept, whereas under the alternative  ($H_{11}$), the model includes an intercept and \texttt{gender} as a predictor.   Then, we test if \texttt{read} scores are predictive of \texttt{math} scores when \texttt{science} scores are already in the model. In the null hypothesis ($H_{02}$), the model includes an intercept and \texttt{science} scores. Under the alternative hypothesis ($H_{12}$),  the model contains an intercept, \texttt{science} scores, and \texttt{read} scores.

\begin{figure}[h!]
    \centering
    \includegraphics[width= \textwidth]{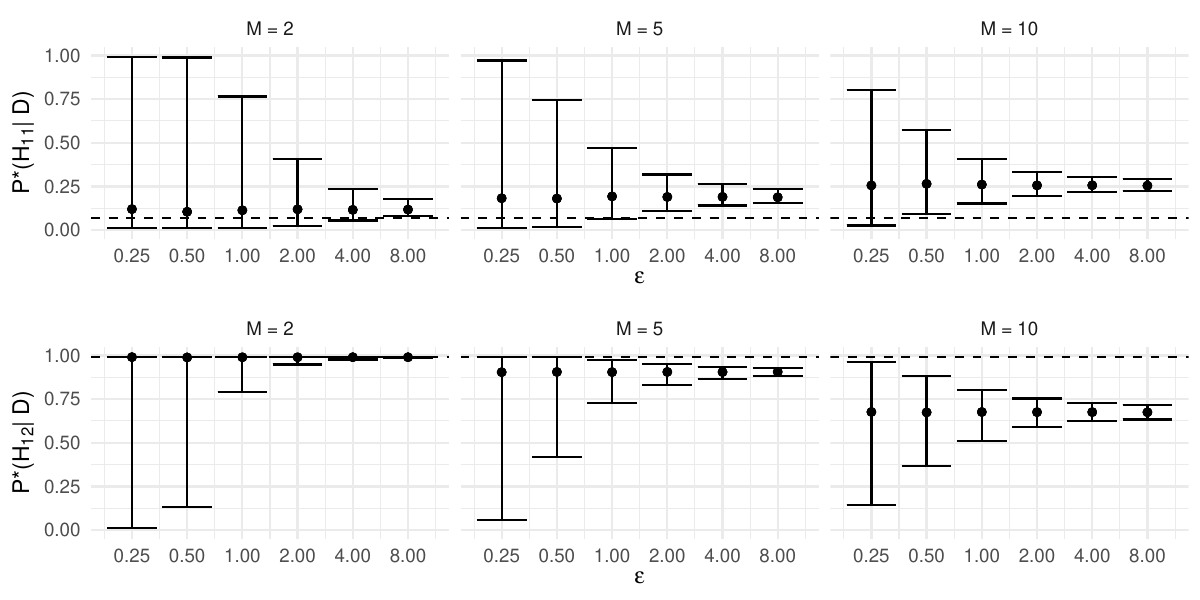}
  \caption{ Distribution of $P^\ast(H_{11} \mid D)$ and $P^\ast(H_{12} \mid D)$  as a function of $\varepsilon$ and $M$. The lower endpoint of the error bars is the first quartile of the distribution, the midpoint is the median, and the upper endpoint is the third quartile. The dashed lines are the true confidential posterior probabilities.}
 \label{fig:postprobs}
\end{figure}

In Figure~\ref{fig:postprobs}, we show the distribution of the differentially private posterior probabilities $P^\ast(H_{11} \mid D)$ and $P^\ast(H_{12} \mid D)$ for different combinations of $\varepsilon$ and $M$, simulating random data splits and perturbation terms $\eta \sim L_1(0, \Delta/(M \varepsilon))$. Our prior probabilities on the hypotheses are $P(H_{01}) = P(H_{02}) = 0.5$, and the prior on the regression coefficients is Zellner's $g$-prior with $g_i$ equal to the sample sizes of the subgroups $b_i$. We censor logarithms of Bayes factors at $L = \log((1-0.99)/0.99)$ and $U = \log((1-0.01)/0.01)$, which is equivalent to censoring posterior probabilities of hypotheses at 0.01 and 0.99, respectively.  Each error bar has been computed with $10^4$ simulations. 

As expected, increasing $\varepsilon$ and $M$ reduces the variability in $P^\ast(H_{11} \mid D)$ and $P^\ast(H_{12} \mid D)$. Additionally, $M$ induces a conservative bias, in the sense that the posterior probabilities are shrunk towards 0.5. For example, if $M = 10$ and $\varepsilon$ is large, $P^\ast(H_{11} \mid D) \approx 0.25$ and $P^\ast(H_{12} \mid D) \approx 0.70$, which are more conservative than the confidential answers $P(H_{11} \mid D) \approx 0.07$ and $P(H_{12} \mid D) \approx 0.99$.

\begin{figure}[h!]
    \centering
    \includegraphics[width= \textwidth]{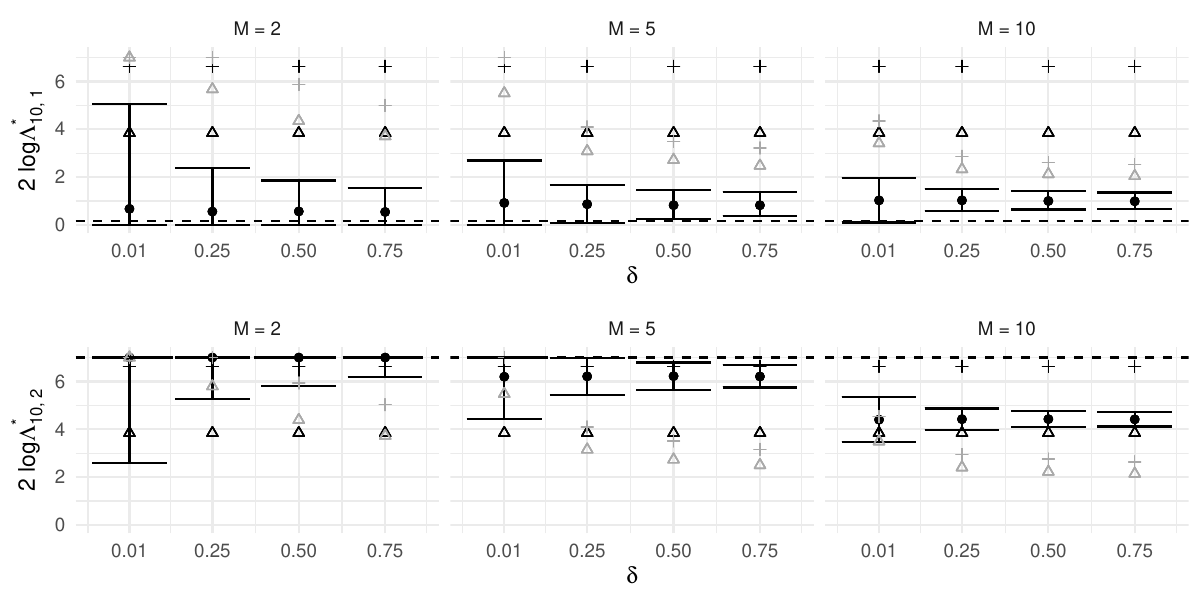} 
    \caption{Distribution of  $2 \log \Lambda^{\ast }_{10, 1}$  and $2 \log \Lambda^{\ast }_{10, 2}$ as a function of $\delta$ and $M$. The gray $+$ and $\vartriangle$ are corrected critical values at the 0.01 and 0.05 significance levels, respectively. The black $+$ and $\vartriangle$ are uncorrected critical values. Dashed horizontal lines are $2 \log \Lambda_{10, 1}$ (first row) and $U = 7$ (second row).}
    \label{fig:calibLR}
\end{figure}  

In Figure~\ref{fig:calibLR}, we show the distributions of $2 \log \Lambda^{\ast}_{10, 1}$ and  $2 \log \Lambda^{\ast}_{10, 2}$ for $L = 0$, $U = 7$, $\varepsilon = 1$, and different values of $0 < \delta < 1$. We treat the data as fixed and quantify the uncertainty introduced by splitting the data randomly and the perturbation term $\eta$, which in this case is normally distributed. The non-private test of $H_{01}$ against $H_{11}$ does not reject $H_{01}$ at the $0.01$ or $0.05$ significance levels, whereas the non-private test of $H_{02}$ against $H_{12}$ rejects $H_{02}$ at the $0.01$ and $0.05$ significance levels. In fact, $2 \log \Lambda_{10, 2}$ is greater than $U$.

As we argued in Section \ref{subsec:cens}, the critical values for rejecting the null hypothesis have to be corrected to take the censoring and data-splitting into account. When $\delta$ is small and $M \in \{2, 5\}$, the corrected critical values (gray $+$ and $\vartriangle$) are larger than the uncorrected ones (black $+$ and $\vartriangle$), so if we did not correct the critical values, we would have inflated type I errors. For larger $\delta$ and $M = 10$, the corrected critical points are lower than the uncorrected ones, showing that correcting critical values can increase power. The distribution of $2 \log \Lambda^{\ast}_{10, 1}$ is generally below the corrected critical values, which coincides with the confidential decision of not rejecting the null. In the case of $2 \log \Lambda^{\ast}_{10, 2}$, corrected tests would reject the null most of the time, in agreement with the confidential answer, especially when $M > 2$ and $\delta \ge 0.25$.

\textcolor{black}{{In Section~\ref{suppl:hsb2cens} of the Appendix, we repeat the simulation study with different censoring limits. Decreasing the sensitivity $U-L$ decreases the variance, but it can introduce bias and lead to tests that are not powerful. In the next subsection, we give simple guidelines to set the censoring limits.}}

\subsection{Guidelines} 

In Section~\ref{subsec:hsb2}, we observed a bias-variance tradeoff when choosing the number of subgroups $M$: as $M$ increases, the variability introduced by $\eta$ decreases at the expense of introducing some bias. 

Before obtaining a private statistic, users can simulate mean squared errors to decide the value of $M$ that is most appropriate in their application. The simulations can be run for different values of $R^2$. Another option is using the sampling distribution of $R^2$, which is known, and average our uncertainty over it. 

\textcolor{black}{The bounds $L$ and $U$ can be values of the confidential statistic that users consider small or large enough.  Stringent censoring decreases the variance of the output, but it can introduce substantial bias if the true posterior probability or likelihood ratio lies outside the censoring limits. In Section~\ref{subsec:hsb2}, we chose $L$ and $U$ that were equivalent to censoring posterior probabilities at $0.01$ and $0.99$, respectively, since those values indicate strong support against or in favor of hypotheses. Practictioners can also use Jeffreys' scale of evidence (Table~\ref{tab:jeffreys}) to censor Bayes factors instead of posterior probabilities. In the case of likelihood ratios, we recommend setting the lower censoring bound $L$ to 0, which is natural in this case, and let the upper censoring value be greater than the critical value for rejection of the non-private test to avoid loss of power. Setting $U$ to approximately twice the critical value performed well in our applications.}

\section{Model Averaging and Selection} \label{sec:modunc}

In this section, we work with the normal linear model as defined in Section~\ref{sec:DPBF}, but instead of comparing two nested models, we consider all the $2^p$ models that can be constructed by taking subsets of columns of $X$. We express our model uncertainty through a binary vector $\gamma  = (\gamma_1, \, \gamma_2, \, ... \, , \gamma_p)' \in \{0,1\}^p$ such that $\gamma_j = 0$ if and only if $\beta_j = 0$. We use the notation $|\gamma| = \sum_{j=1}^p \gamma_j$ for the number of active coefficients in a model. 

Given a model indexed by $\gamma$, we write the linear model as 
$$
Y = X_0 \beta_0 + V_\gamma \beta_\gamma + \sigma W, \, \, \, W \sim N_n(0_n, I_n),
$$
where $\beta_{\gamma} \in \mathbbm{R}^{|\gamma| \times 1}$ is a vector including the $\beta_j$ such that $\gamma_j = 1$ and $V_{\gamma} \in \mathbbm{R}^{n \times |\gamma|}$ is a matrix with the active variables in $\gamma$. Just as we did in Section~\ref{sec:DPBF}, we parameterize the model so that $X_0$ and $V_{\gamma}$ are orthogonal. For convenience, we denote the null model where none of the variables are active as $\gamma = 0$. The matrix $X_0$ represents a set of predictors we are sure to include in our model. Usually, $X_0$ contains an intercept $1_n$, but it can be empty as well. If $X_0$ is empty, the methods in this section would be defined in analogous manner after replacing $I_n - P_{X_0}$ by $I_n$.

From a Bayesian perspective, our prior specification on the regression coefficients $\beta_\gamma$ is the same we had in Section~\ref{sec:DPBF}: we put mixtures of $g$-priors on $\beta_\gamma \mid  \sigma^2, \gamma$ and the right-Haar prior $\pi(\psi, \sigma^2) \propto 1/\sigma^2$ on the common parameters. If $\gamma$ is not the null model, the expression for the non-private Bayes factor of model $\gamma$ to the null model, denoted $B_{\gamma 0}$, is identical to the one in Equation~\eqref{eq:bf} after substituting $p$ by $|\gamma|$ and $R^2$ by $R^2_{\gamma} = Y'P_{V_{\gamma}} Y/Y'(I-P_{X_0})Y$. If $\gamma$ is the null model, we have $B_{\gamma 0} = 1$.

Given a prior  distribution $\pi(\gamma)$ on $\gamma$, the posterior probability of the model identified by $\gamma$ given the data $D$ is defined as 
$$
P(\gamma \mid D) = P(\gamma) \, B_{\gamma 0}/ \sum_{\tilde{\gamma} \in \{0,1\}^p} P(\tilde{\gamma}) \, B_{\tilde{\gamma} 0},
$$
which depends on the data only through the Bayes factors $B_{\gamma 0}$. We assume that the prior $P(\gamma)$ can depend on $p$, but not on $n$ or the design matrix. Most common choices for $P(\gamma)$, like a uniform prior $P(\gamma) = 2^{-p}$ or the hierarchical uniform prior recommended by \cite{scott2010bayes} satisfy the condition. 

From a non-Bayesian perspective, the information criteria  $I^\ast_{\gamma 0}$ are as in Equation~\ref{eq:info} after substituting $R^2$ by $R^2_{\gamma}$ and $\rho$ by $\rho_{|\gamma|}$, where $\rho_{|\gamma|}$ is an increasing function in $|\gamma|$ such as $\rho_{|\gamma|} = |\gamma|$. 

We could use the method in Section~\ref{subsec:lm} to release differentially private versions of all the $B_{\gamma 0}$ or $I_{\gamma 0}$. However, if we released those $2^p$ statistics, the variance of the private statistics would increase exponentially in $p$. Instead, we propose working with a perturbed version of a sufficient statistic whose dimension increases quadratically in $p$. 

Let $Z = (I - P_{X_0})Y \in \mathbbm{R}^{n \times 1}$ be the centered outcome variable and $V \in \mathbbm{R}^{n \times p}$ be the design matrix of the full model that includes all $p$ predictors, \textcolor{black}{which we collect in a data matrix $D = [V; Z]$}. Assuming $Z'Z > 0$ almost surely, we define 
$$
G = D'D = \begin{bmatrix} V'V & V'Z \\ 
Z'V & Z'Z
\end{bmatrix} = \begin{bmatrix} V'V & V'Y \\ 
Y'V & Y'(I-P_{X_0})Y
\end{bmatrix}.
$$
\color{black}{The Gram matrix $G$ is a sufficient statistic for the normal linear model (see, for example, \cite{seber2012linear}).} As a consequence, all of the $R^2_\gamma$, $B_{\gamma 0}$, and $I_{\gamma 0}$ can be constructed by taking appropriate subsets of $G$.

We propose releasing a noisy version of the sufficient statistic $G$, a technique known in the differential privacy literature as \textit{sufficient-statistic perturbation} (see, for example, \cite{mcsherry2009differentially, vu2009differential} and \cite{bernstein2019differentially}).

\color{black}

We construct a differentially private version of $G$ by adding a random perturbation term, defining $G^\ast = G + E$,
where $E$ is a random  perturbation matrix that ensures differential privacy. To ensure $\varepsilon$-differential private, we use the Laplace mechanism. For $(\varepsilon,\delta)$-differential privacy with $0 < \varepsilon < 1$ and $0 < \delta \le 1$, we use Algorithm 2 in \cite{sheffet2019old}, which we refer to as the Wishart mechanism. 

To establish the parameters of the distribution of $E$, we assume that there are lower and upper bounds for the data: that is, there are $l$ and $u$ such that each entry $d_{ij}$ in $D$ is within the interval $[l, u]$. Since the response and predictors are centered, $l<0$ and $u>0$. The entries of $G = D'D$ are of the form $\sum_{i=1}^n d_{ij}d_{ij'}$. Here, $d_{1j}$ is the first row and $j$th column of $D$. If we replace this entry by another one, say $\tilde d_{1j'}$, then the maximum absolute difference in the entries of $G$ is $|d_{1j}d_{1j'} - \tilde d_{1j} \tilde d_{1j'}|$. When computing the sensitivity $\Delta_1$, we only consider the entries where $j \ge j'$ because $G$ is symmetric. Therefore, the sensitivity $\Delta_1$ can be upper-bounded as follows:
\begin{eqnarray*} 
\Delta_1 & = &
\sup_{D \sim \widetilde{D}} 
\Vert D'D -\tilde D' \tilde D \Vert_1
 \\
 & = & 
\sup_{D \sim \widetilde{D}} 
\sum_{j \geq j'} |d_{1j}d_{1j'} - \tilde d_{1j} \tilde d_{1j'}| 
  \\
 & = & 
\sum_{j \geq j'} \sup_{D \sim \widetilde{D}} 
|d_{1j}d_{1j'} - \tilde d_{1j} \tilde d_{1j'}| 
  \\
& \leq & 
(p+1)(p+2) \sup_{d_{11},\,d_{11'} \in (u,l)}|d_{11}d_{11'}|\\
& \leq & (p+1)(p+2) (l^2 \vee u^2).
\end{eqnarray*}
For the Laplace mechanism, we define the perturbation term $E$ as a symmetric random matrix of the form
$$
E = 
\begin{bmatrix}
e_{11} & e_{12} &... & e_{1, p+1} \\
e_{12} & e_{22} & ... & e_{2, p+1} \\
\vdots & \vdots & \ddots & \vdots \\
e_{1, p+1} & e_{2, p+1} & ... & e_{p+1, p+1}
\end{bmatrix},
$$ 
where $e_{jj'} \stackrel{\rm iid}{\sim} L_1\left(0,  \Delta_1/\varepsilon \right)$, for  $j  \ge j'$, and $e_{j'j} = e_{jj'}$.
For the Wishart mechanism \cite[i.e., Algorithm 2 in][]{sheffet2019old}, 
we need a uniform bound on the Euclidean norm of the rows of $D$. Given the assumption $l<d_{ij}<u$, we can use $\sqrt{(p+1)(l^2 \vee u^2)}$ as a uniform bound. In this case, given the bound, we define the random perturbation as $E = M - \mathbb{E}(M)$, where $M \sim W_{p+1}(k, (p+1)(l^2 \vee u^2) I_{p+1})$ and $k = \lfloor p+1 + 28 \log(4/\delta)/\varepsilon^2 \rfloor$ with $0 < \varepsilon < 1$ and $0 < \delta \le 1$.





\color{black}

For both mechanisms, the variances of the entries in $E$ can be high, especially for small values of $\varepsilon$ or $\delta$. This, in turn, can lead to outputs that overestimate the number of active predictors. To avoid this issue, we propose post-processing $G^\ast$ in two ways: hard-thresholding off-diagonal elements as in \cite{bickel2008covariance} and adding a constant $r$ to the diagonal elements. 

We propose thresholding the off-diagonal entries of $G^\ast$ at $e_{\lambda}$, the $\lambda$-th percentile of $E_{ij}$. More formally, we define \textcolor{black}{$G^{**}$ to be the matrix with typical element $G^{\ast}_{ij} \mathbbm{1}( i = j \text{ or } |G^{\ast}_{ij}| \ge e_{\lambda})$}. This choice can be justified as follows: if an off-diagonal entry of $G^{\ast}_{ij} = G_{ij} + E_{ij}$ is not an extreme value in the distribution of $E_{ij}$, it is likely that $G^\ast_{ij}$ is essentially $E_{ij}$ and $G_{ij}$ is nearly zero.

Adding a constant $r$ to the diagonal elements of $G^\ast$ can be seen as a ridge-type of regularization that can reduce the variability of the output. It can also be used to guarantee that the output is positive-definite because, after adding the Laplace perturbation terms, $G^\ast$ may not be positive-definite. In fact, the hard-thresholded matrix $G^{\ast \ast}$ need not be positive-definite even if $G^\ast$ is positive-definite \citep{bickel2008covariance}. Given a symmetric matrix $A$, which here can be either $G^\ast$ or $G^{\ast \ast}$, the matrix $A_r = A + r I_{p+1}$ is positive-definite as long as $r > - \mathrm{eig}_{\mathrm{min}}(A)$, where $\mathrm{eig}_{\mathrm{min}}( \cdot )$ is a function returning the minimum eigenvalue of a matrix. 

In our applications (in Section~\ref{subsec:empeval}), we consider methods that hard-threshold $G^\ast$ and methods that do not. That is, we compare methods that are based on $G^{\ast \ast}_r = G^{\ast \ast} + r I_{p+1}$ to methods based on $G^\ast_r = G^\ast + r I_{p+1}$, where $G^\ast$ is not hard-thresholded. In our experience, hard-thresholding is helpful when the ground truth is sparse, but it can be detrimental when most predictors are active. We justify this argument in Section~\ref{subsec:empeval}.

Proposition~\ref{prop:cons_modunc} establishes model-selection consistency under some assumptions. {\textcolor{black}{More precisely, we show that the differentially private Bayes factor of any model $\gamma$ to the true model, which can be expressed as $B^\ast_{\gamma T} = B^{\ast}_{\gamma0}/B^{\ast}_{T 0}$, converges to zero in probability for any $\gamma \neq T$. The differentially private information criteria $I^\ast_{\gamma T}$ are also consistent under the assumptions listed below.}} We proved the result by characterizing the asymptotic behavior of $G^\ast_r$, $G^{\ast \ast}_r$, and $R^{2, \ast}_{\gamma}$, and then bounding the Bayes factors and information criteria above and below. Just as we had in Section~\ref{sec:DPBF}, the proof covers  $\varepsilon$ and $(\varepsilon, \delta)$ differentially private methods. The proof does not follow from \cite{liang2008mixtures} for several reasons, one of them being that we are not assuming that the response given the covariates is normal, since we assume that the data are bounded.

\begin{prop}\label{prop:cons_modunc} Let $T \in \{0, 1\}^p$ be a vector indexing the truly active predictors. As $n \rightarrow \infty$, and under the regularity conditions listed below, $B^{\ast}_{\gamma T} \rightarrow_P 0$ and $I^{\ast}_{\gamma T} \rightarrow_P 0$ for any $\gamma \neq T$.
\begin{enumerate}
     \item \textbf{Boundedness:}  The data $D$ are within the interval $[l, u]$ for finite $l$ and $u$.
    \item \textbf{Regression mean and variance:} $\mathbb{E}(Y \mid X_0, V) = X_0 \psi + V_T \beta_T$ and $\mathrm{Var}(Y \mid X_0, V) = \sigma^2_T I_n$, where $V_T \in \mathbbm{R}^{n \times p_T}$ is a matrix that contains the truly active predictors.
     \item \textbf{Privacy parameters:}  The privacy parameters $\varepsilon$ and $\delta$ are such that the matrix perturbation term $E/n \rightarrow_P 0.$
    \item \textbf{Regularization parameters:} $\lambda$ is fixed and $r$ is so that $\lim_{n \rightarrow \infty} r/n = 0$.
    \item \textbf{Design matrices:} $\lim_{n \rightarrow \infty} V'V/n = S_1$, where $S_1$ is symmetric and positive-definite.
    \item \textbf{Priors on $g$ and penalties $\rho$:} for all $1 \le |\gamma| \le p$, the prior $\pi(g)$ satisfies
\begin{align*}
 \lim_{n \rightarrow \infty} \int_{0}^\infty  n^{p/2} \, (g+1)^{-|\gamma|/2} \pi(g) \, \nu(\mathrm{d} g) &< \infty \\ \lim_{n \rightarrow\infty}   \int_{n}^\infty  n^{p/2}  (g+1)^{-|\gamma|/2}  \pi(g) \, \nu( \mathrm{d} g) &> 0,
\end{align*}
or from a non-Bayesian perspective, $\rho_{|\gamma|}$ is increasing in $|\gamma|$ and satisfies $|\gamma| \le \rho_{|\gamma|} \le  |\gamma| \vee \log n$. 
\end{enumerate}
\end{prop}

The framework here is different to the one in Section~\ref{sec:DPBF}. We assume that the  confidential data are bounded (Assumption 1) and do not assume that the distribution of the outcome given the predictors is normal. In other words, Bayes factors and information criteria are misspecified beyond the addition of the perturbation term $E$. Nonetheless,  Proposition~\ref{prop:cons_modunc} shows that the methods are consistent. Our setup is also distinct to the one adopted in \cite{lei2018differentially}, where it is simultaneously assumed that the response is normal (Assumption 1 in \cite{lei2018differentially}) and bounded (Assumption 4).  Assumption 2 requires  that the regression mean be well-specified and that the covariance of the response given the predictors be spherical. Assumption 3 forces the perturbation matrix $E$ to be so that $E/n \rightarrow_P 0$. As we had in Section~\ref{sec:DPBF}, it suffices to let $\varepsilon$ and $\delta$ be fixed for it to hold. Assumption 4 imposes conditions on the regularization parameters. The case of a non-thresholded matrix $G^\ast_r$ is included as $\lambda = 0$. Assumption 5 is similar to the regularity condition on design matrices in Proposition~\ref{prop:consDPBF}. Finally, Assumption 6 is essentially the same as the assumptions on the priors and penalties in~\ref{prop:cons_modunc}. Just as we had in Section~\ref{sec:DPBF}, the differentially private Bayes factors with Zellner's $g$-prior with $g = n$, the robust prior, Zellner-Siow, and BIC are all consistent, and so is BIC.

{\color{black}{In the common scenario where $X_0$ is an intercept $1_n$, the methods described in this section can be conveniently implemented in \texttt{R} with the \texttt{bas.lm} function in \texttt{library(BAS)}. Given a set of predictors and an outcome variable, the \texttt{bas.lm} function enumerates Bayes factors for small to moderate $p$ and samples from the model space for large $p$. The function outputs other statistics of interest such as posterior inclusion probabilities and model-averaged estimates. To use \texttt{bas.lm} for our problem, we need to generate a synthetic dataset $\mathcal{D}$ (containing both centered predictors and outcome) whose sufficient statistic $\mathcal{D}'\mathcal{D}$ is equal to a fixed Gram matrix $\mathcal{G},$ which can be $G^\ast_r$ or $G^{\ast \ast}_r$. Proposition~\ref{prop:synthetic} below shows how to obtain such a synthetic data set $\mathcal{D} = [\mathcal{V}; \mathcal{Z}]$ given a Gram matrix $\mathcal{G}$.

\begin{prop} \label{prop:synthetic}
Let $\mathcal{U} \in \mathbbm{R}^{n \times (p+1)}$ be a full-rank matrix and define $\mathcal{M} = (I_n-1_n1_n'/n) \mathcal{U}$. Given a Gram matrix $\mathcal{G}$, we can generate a synthetic dataset $\mathcal{D} = [\mathcal{V}; \mathcal{Z}]$ with  $\mathcal{D} = \mathcal{M} (\mathcal{M}'\mathcal{M})^{-1/2} \mathcal{G}^{1/2}$.   The synthetic dataset $\mathcal{D}$ satisfies the identities $\mathcal{D}'\mathcal{D} = \mathcal{G}$, $\mathcal{V}' 1_n = 0_{p}$, and $\mathcal{Z}' 1_n = 0$. 
\end{prop}

Proposition~\ref{prop:synthetic} guarantees that the outputs we obtain from  running \texttt{bas.lm} on the synthetic data $\mathcal{D}$ are identical to what we would find by taking subsets of $\mathcal{G}$ directly. In the proposition, the matrix $\mathcal{U}$ is arbitrary, but in practice its entries can be simulated by sampling independently from the uniform distribution.

\color{black}
\subsection{Quantifying the Uncertainty Introduced by the Mechanism}
\label{subsec:hist}

\color{black}
With the non-thresholded methods, the private statistics are of the form
$G^{\ast}_r = G + E +r I_{p+1}$. Since $r$ and the distribution of $E$ are both known, we can define a confidence set for the non-private $G$ given $G^{\ast}_r$. With such a set, it is possible to find confidence regions for summaries of interest $T(G)$ like least-squares estimates or inclusion probabilities.

To define a $1-\alpha$ confidence set for $G$, we first find $\mathcal{E}_{1-\alpha}$ such that $P(E\in\mathcal{E}_{1-\alpha})=1-\alpha.$ For a matrix norm $\lVert \cdot \rVert$, define $\mathcal{E}_{1-\alpha} = \{ E : \lVert E - \mathbb{E}(E) \rVert  \le q_{1-\alpha}\}$ where $P[ \lVert E - \mathbb{E}(E) \rVert \le q_{1-\alpha}] = 1-\alpha$. Then, $\mathcal{\tilde{C}}_{1-\alpha} = \{ G^{\ast}_r - r I_{p+1} - E: E \in \mathcal{E}_{1-\alpha} \}$ is a $1-\alpha$ confidence set for $G$. Since $G$ is symmetric and positive-definite, we can intersect $\mathcal{\tilde{C}}_{1-\alpha}$ with the set of symmetric positive-definite matrices $\mathcal{S}_{++}$ to define a $1-\alpha$ confidence region $\mathcal{C}_{1-\alpha} = \mathcal{\tilde{C}}_{1-\alpha} \cap \mathcal{S}_{++}$ whose volume is at most that of $\mathcal{\tilde{C}}_{1-\alpha}$. The confidence set $\mathcal{C}_{1-\alpha}$ can be transformed into $T(\mathcal{C}_{1-\alpha})$ to produce confidence sets for summaries of interest $T(G)$. 


We can approximate $T(\mathcal{C}_{1-\alpha})$ with a rejection sampler. First, simulate $E_{1}, E_{2}, \, ... \, , E_{n_\mathrm{sim}}$ from the appropriate mechanism and compute $\lVert E_i - \mathbb{E}(E) \rVert$ for $i \in \{1,2, \, ... \, , n_{\mathrm{sim}}\}$. Then, approximate the $((1-\alpha) \times 100)$th percentile $q_{1-\alpha}$ with its empirical version $\hat{q}_{1-\alpha} = \inf \{q :  \sum_{i=1}^{n_\mathrm{sim}} \mathbbm{1}(\lVert E_i - \mathbb{E}(E) \rVert \le q)/n_{\mathrm{sim}} \ge 1-\alpha\}$ and define $\hat{\mathcal{E}}_{1-\alpha} = \{ E_i: \lVert E_i - \mathbb{E}(E) \rVert \le \hat{q}_{1-\alpha}, \, i = 1, \ldots, n_\mathrm{sim}\}$. After that, we can find $T(\mathcal{\hat C}_{1-\alpha})$; that is, compute $T(G^{\ast}_r - r I_{p+1} - E_i)$ for $E_i \in \hat{\mathcal{E}}_{1-\alpha}$ and only keep those such that $G^\ast_r - r I_{p+1} - E_i \in \mathcal{S}_{++}$.

In general, the confidence set need not be an interval, but we can summarize the confidence set with a histogram. To do so, we define the bins of the histogram as $\mathcal{B}_k = [t_{k-1},t_k)$
with $\min\{T(\mathcal{\hat C}_{1-\alpha})\} = t_0 < t_1 < \ldots < t_{K-1} < t_K = \max\{T(\mathcal{\hat C}_{1-\alpha})\}$ and their corresponding relative frequencies $\#\mathcal{B}_k$ using the number of elements of $T(\mathcal{\hat C}_{1-\alpha})$ that fall in $\mathcal{B}_k$, $k = 1,\ldots,K.$ We denote the histogram summarizing $T(\mathcal{\hat C}_{1-\alpha})$ as $\text{Hist}(T,\mathcal{\hat C}_{1-\alpha}) 
= \{(\mathcal{B}_k,\#\mathcal{B}_k)\}_{k=1}^K$. We can also report histograms in a density scale; that is, $\text{Hist}(T,\mathcal{\hat C}_{1-\alpha}) 
= \{(\mathcal{B}_k, d_k)\}_{k=1}^K$, where 
$$
d_k = \frac{\#\mathcal{B}_k}{|T(\mathcal{\hat C}_{1-\alpha})| (t_k-t_{k-1})}
$$
and $|T(\mathcal{\hat C}_{1-\alpha})|$ denotes the cardinality of $T(\mathcal{\hat C}_{1-\alpha})$.

\color{black}

\subsection{Empirical Evaluations}
\label{subsec:empeval}

We evaluate the performance of the methods described in this section in a simulation study and an application. The simulation study is similar to the one in \cite{liang2008mixtures}, whereas the real data set is a subset of the March 2000 Current Population Survey that was analyzed in \cite{barrientos2019differentially}.

We implement the methods with the \texttt{R} package \texttt{BAS} \citep{BAS}. In the simulation study, we found Bayes factors with the Zellner-Siow prior (ZS) and information criteria with BIC. The prior distribution on the model space $\pi(\gamma)$ is the hierarchical uniform prior proposed in \cite{scott2010bayes}. From a non-Bayesian perspective, $\pi(\gamma)$ acts as a function that weighs the information criteria. The results with Zellner-Siow and BIC are almost identical. We report the outputs based on the former here and show the results with the latter in Appendix~\ref{suppl:plots}.

We compare the results we obtain by hard-thresholding and not thresholding the Gram matrix $G^\ast$. In all cases, we add a regularization parameter $r$ to the diagonal entries of $G^\ast$. For the Laplace mechanism, we set $r$ to be the 99-th percentile of $\mathrm{eig}_{\rm min}(E)$, which we find via simulation. For the Wishart mechanism, we use the analytical expression in Remark 2 of \cite{sheffet2019old}.

\subsubsection{Simulation Study}
\color{black}

We simulate data from a normal linear model with $p$ predictors, where $p$ is set to 2, 6, or 9. The sample size $n$ (in thousands) varies from $5$ to  $10,000$. The number of active predictors in the true model $|T|$ depends on the value of $p$ and ranges from 0 (null model is true) to $p$ (full model is true). Specifically, if $p=2$, we set $|T|\in\{0,1,2\}$; if $p=6$, we set $|T|\in\{0,3,6\}$; and if $p=9$, we set $|T|\in\{0,4,9\}$.  The predictors are independently drawn from the uniform distribution on $(-2,2)$. Following \cite{hastie2017extended}, we define the signal-to-noise ratio (SNR) as the variance of the regression function (which is random, since we are simulating predictors and $\beta$) divided by $\sigma^2$. In our simulations, we assume that the intercept is zero and $\beta$ is a $p$-dimensional vector equal to $b [1,\ldots,1]'$. We use optimization to find $\sigma^2$ and $b$ such that SNR $= 0.5$ and the response falls within $(-2,2)$ with high probability. For each combination of $|T|$ and $n$, we simulate 1,000 data sets. All the data sets we simulated are such that the response falls in $(-2,2)$. We consider $\varepsilon \in \{0.5,0.9\}$ and, in the case of the Wishart mechanism, we set $\delta = 1/n$.

\begin{figure}[h!]
  \centering
  \includegraphics[scale= 0.55]{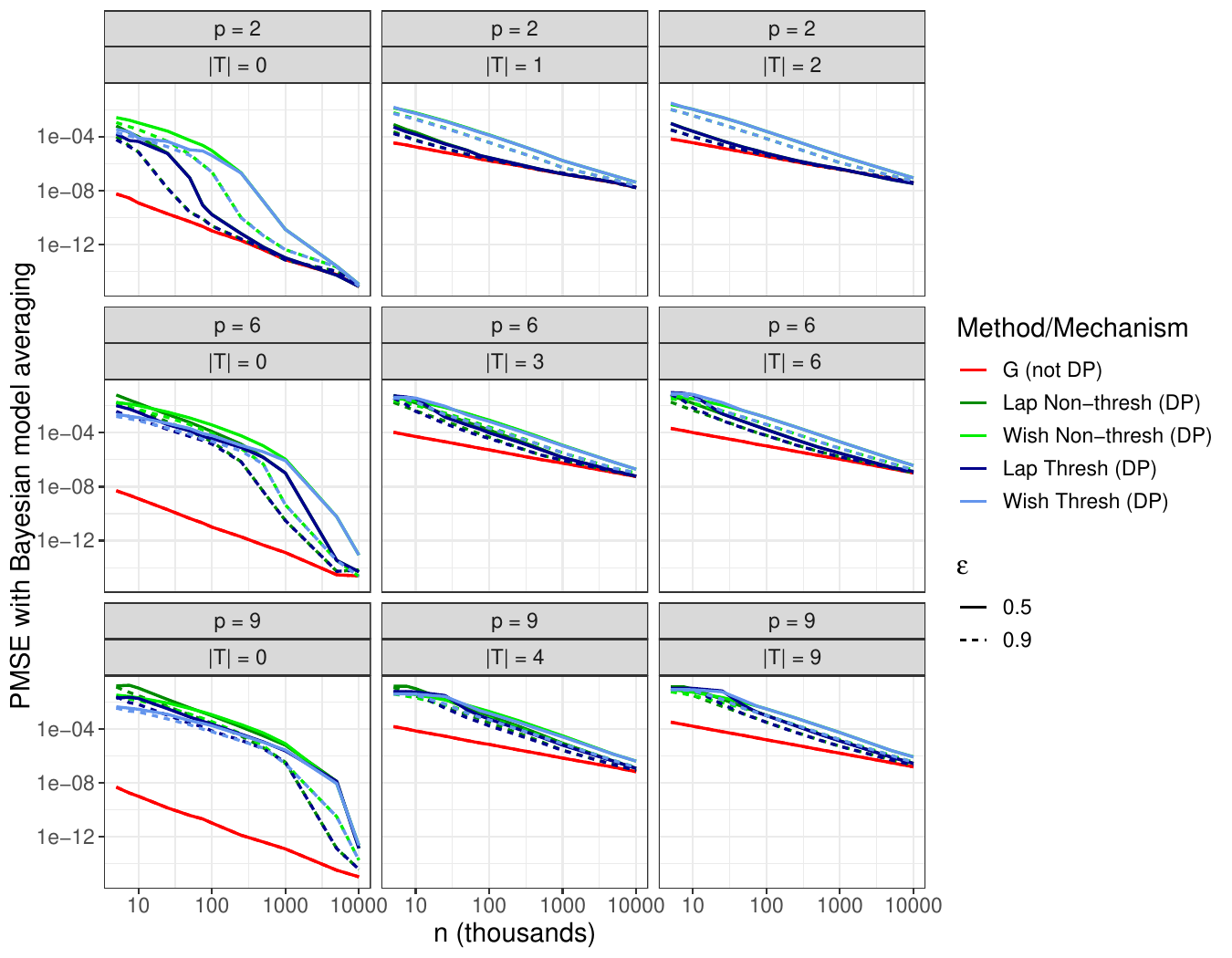}
  \caption{Simulation study: Sample size ($x$-axis) against log(PMSE) ($y$-axis) with Zellner-Siow prior.}
 \label{fig:log_MSE_simulation_ZS}
\end{figure}

We assess the performance of the methods by tracking Monte Carlo averages of predictive mean squared errors and the posterior probability of the true model. We define the predictive mean squared error as $\mbox{PMSE} = n^{-1}\Vert V_T \beta_T - V\beta^\ast \Vert_2^2,$ where $V_T$ is a design matrix containing truly active predictors, $\beta_T$ is the true value of $\beta$, and $\beta^\ast$ is the differentially private model-averaged posterior expectation. 

\begin{figure}[h!]
  \centering
  \includegraphics[scale= 0.55]{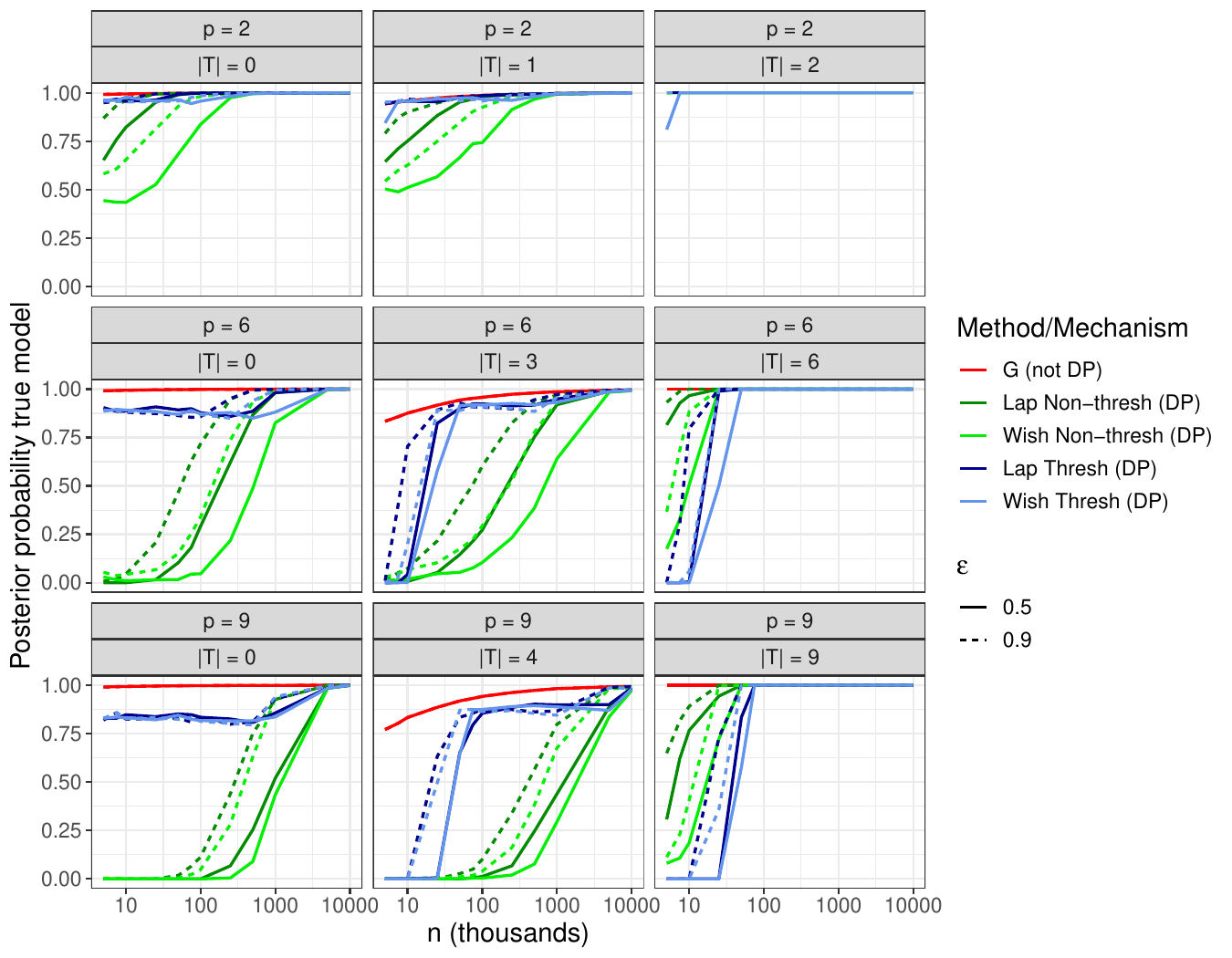}
 \caption{Simulation study: Sample size ($x$-axis) against posterior probability of the true model ($y$-axis) with Zellner-Siow prior. }
\label{suppl:fig:Inclusion_Prob_ZS}
\end{figure}

Figure \ref{fig:log_MSE_simulation_ZS} displays PMSEs for different values of $n$, $p$, $\varepsilon$, and $|T|$. As expected, the PMSEs for both private and non-private approaches decrease as the sample size increases. We observe that the PMSEs of the differentially private methods are smaller when $\varepsilon$ is 0.9 compared to when $\varepsilon$ is 0.5, and they are always higher than the non-private PMSEs. This is expected, as larger values of $\varepsilon$ should lead to greater statistical utility. In most cases, the methods based on the Laplace mechanism have a lower PMSE than those based on the Wishart mechanism. However, the Wishart mechanism seems to perform better in the case when $p$ is either 6 or 9, $|T|$ is zero, and the sample size is small. Although this is slightly less evident in the figure, upon closer inspection, we can see that methods based on hard-thresholding tend to have a slightly lower PMSE when $|T|$ is zero but cease to be advantageous when $|T|$ is large and the sample size is small.

Figure~\ref{suppl:fig:Inclusion_Prob_ZS} displays the posterior probabilities of the true model, for the same values of $n$, $p$, $\varepsilon$, and $|T|$ that we used in Figure~\ref{fig:log_MSE_simulation_ZS}. The results are consistent with what we observed for PMSEs. We also observe that, although all probabilities increase as the sample size increases, the rate at which they increase depends on $p$. For higher values of $p$, the rate of increase is lower because the computational complexity of the problem increases with the dimension of $G$. The number of entries in $G$ increases quadratically in $p$, and so does the variance of the perturbation term added to ensure differential privacy. This fact affects the convergence rate of the posterior probabilities.}

\color{black}

\subsubsection{Application: Current Population Survey}
\label{subsubsec:CPS}

The data set includes $n = 49,436$ heads of households with non-negative incomes. We consider 6 predictors: age in years ($\beta_1$), age squared ($\beta_2$), marital status ($\beta_3$), sex ($\beta_4$), education ($\beta_5$), and race ($\beta_6$). All predictors are numeric or binary except for education, which is an ordinal variable. To reduce the number of coefficients in the model, we treat education as numeric, ranging from 1 (for less than 1st grade) to 16 (for doctoral degree). The binary predictors are: marital status (1: civilian spouse present; 0: otherwise), sex (1: male; 0: female), and race (1: white; 0: otherwise). The response variable is income. 

In this application, the non-private inclusion probabilities are all close to one. To provide a more challenging benchmark for our methods, we permute the rows for marital status and education in the design matrix to artificially make the inclusion probabilities for $\beta_3$ and $\beta_5$ close to zero. \color{black}{The predictors and the response are centered and rescaled to the interval $(-0.5,0.5)$.

\color{black}

\begin{figure}[h!]
    \centering
    \includegraphics[scale=0.375,page=13]
    {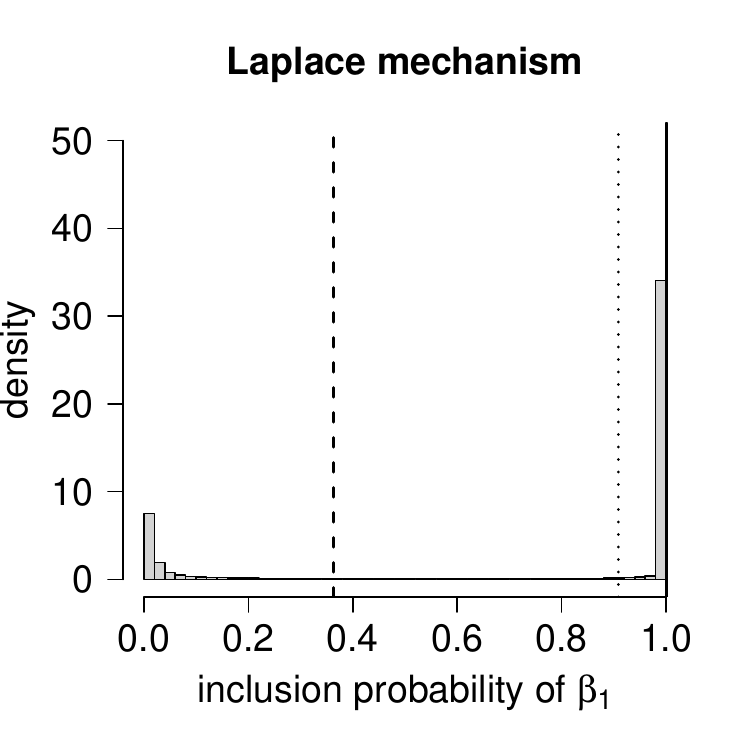}
     \includegraphics[scale=0.375,page=17]
    {figs/application_ZS-null_Bounded.pdf}
        \includegraphics[scale=0.375,page=19]
    {figs/application_ZS-null_Bounded.pdf}
    \includegraphics[scale=0.375,page=14]
    {figs/application_ZS-null_Bounded.pdf}
    \includegraphics[scale=0.375,page=18]
    {figs/application_ZS-null_Bounded.pdf}
    \includegraphics[scale=0.375,page=20]
    {figs/application_ZS-null_Bounded.pdf}

\caption{Current population survey: Posterior expectations of coefficients $\beta_1, \beta_3$, and $\beta_4$  with Zellner-Siow prior and $\varepsilon = 0.9$.  The vertical solid lines are the non-private posterior expectations, whereas the dashed lines and dotted lines are averaged posterior expectations estimated with non-thresholded methods and thresholded methods, respectively.}
 \label{fig:confidence_hist_ZS_epsilon1_B-1-3-4}
\end{figure}

\color{black}

Figure \ref{fig:confidence_hist_ZS_epsilon1_B-1-3-4} displays the posterior expected values of $\beta_1$, $\beta_3$, and $\beta_4$ with the Zellner-Siow prior and $\varepsilon=0.9$.
\color{black}
We use the histograms described in Section~\ref{subsec:hist} to define approximate 95\% confidence sets for $T(G) = \mathbb{E}(\beta_j \mid G)$. Our choice of matrix norm is the Frobenius norm.  Specifically, we run our procedure 250 times and, for each run and a fixed collection of bins $\mathcal{B}_1,\ldots,\mathcal{B}_K$, we summarize each $T(\mathcal{\hat C}_{1-\alpha})$ with its corresponding histogram  $\text{Hist}(T,\mathcal{\hat C}_{0.95}) 
= \{\left(\mathcal{B}_k,d_k\right)\}_{k=1}^K$.

If we let $d_1^{(l)},\ldots,d_K^{(l)}$  be the densities values of the histogram associated with the $l$-th run, $l=1,\ldots,250$, we define average histograms as 
$$\overline{\text{Hist}}\left(T(\cdot) = \mathbb{E}(\beta_j \mid \cdot), \,\mathcal{\hat C}_{0.95}\right) 
= \left\{\left(\mathcal{B}_k,\overline{d}_k = \frac{1}{250} \sum_{l=1}^{250} d_k^{(l)}  \right)\right\}_{k=1}^K, \quad j \in \{1,3,4\}.
$$
\color{black}
The results are displayed  in Figure \ref{fig:confidence_hist_ZS_epsilon1_B-1-3-4}. 
In all cases, the differentially private methods are close to the non-private answers. We can also see that the histograms are useful for quantifying the uncertainty introduced by the mechanism, since their spread increases when the thresholded and non-thresholded methods do not agree in their estimates.

\subsection{Guidelines}
\label{subsec:guide}

Thresholded methods tend to perform best when the true number of predictors is small. On the other hand, when most predictors are active, non-thresholded methods tend to outperform thresholded methods. The root cause behind this phenomenon is that thresholding shrinks the elements in $G^\ast$, which promotes sparsity.

In practice, we recommend that users run analyses with both thresholded and non-thresholded methods. This can be done without affecting the privacy budget of the analyst because both $G^{\ast}_r$ and the thresholded matrix $G^{\ast\ast}_r$ are post-processed versions of the same differentially private matrix $G^{\ast}$.

Finally, we recommend reporting confidence sets whenever possible, since we find them to be a valuable tool for quantifying the uncertainty introduced by the mechanisms.

\section{Conclusions and Future Work}

\color{black}

In this article, we proposed differentially private methods for hypothesis testing, model averaging, and model selection for linear regression. Under regularity conditions, the methods are consistent. The regularity conditions we have imposed are similar to the conditions used in the literature for establishing consistency of non-differentially private methods. 

Our methods for hypothesis testing are based on data-splitting and censoring statistics. We have studied the effects of these operations on the performance of the methods. In the case of data-splitting, increasing the number of subsets reduces the variance of the differentially private statistics, but it adds bias. In the case of censoring, more stringent censoring reduces the variance, but it can lead to substantial bias if the true, confidential statistic lies outside the uncensored range.

The methods we proposed for model averaging and selection are based on a perturbed sufficient statistic. If we suspect that the ground truth is sparse, we recommend hard-thresholding the perturbed sufficient statistic; however, if most predictors are active, hard-thresholding can lead to underfitting. 

The methodology proposed here could be extended in a number of ways. It would be useful to extend the methods to generalized linear models through the framework proposed in \cite{li2018mixtures}. The implementation for hypothesis testing is straightforward, but for model uncertainty there are no low-dimensional sufficient statistics. This obstacle can be overcome by working with approximate sufficient statistics, as proposed in \cite{huggins2017pass}. This approach has been used successfully in differentially private estimation problems in \cite{kulkarni2021differentially}. 

It would also be interesting to extend the methods to survival models, as health records are confidential. In this case, the extension could adapt the framework proposed in \cite{castellanos2021model}. There are no low-dimensional sufficient statistics for these survival models, either, but one could define approximate sufficient statistics as well. 

In our work, we have used off-the-shelf techniques for establishing differential privacy. While we observe that our proposals can be useful in practice, it might be possible to design more efficient mechanisms that are specifically tailored to the tasks we considered. This is another interesting avenue for future research.

\section*{Acknowledgements}

The authors would like to thank the feedback from two anonymous reviewers that greatly improved the presentation and contents of the article. The research of the second author was supported by the National Science Foundation National Center for Science and Engineering Statistics [49100420C0002 and 49100422C0008] and the Test Resource Management Center (TRMC) within the Office of the Secretary of Defense (OSD), contract \#FA807518D0002.

\color{black}

\clearpage

\bibliography{diffprivacy}
\bibliographystyle{chicago} 

\clearpage

\begin{appendix}

\begin{center}
    \Large \textbf{Appendix}
\end{center}
\normalsize

In Section \ref{suppl:section:Proofs}, we include the proofs of the propositions in the main text, as well as auxiliary results that are helpful for proving them. In Section~\ref{suppl:hsb2cens} we study the effects of censoring and data-splitting in the High School and Beyond Survey application we considered in Section~\ref{sec:DPBF}. Finally, in Section \ref{suppl:plots} we include additional plots for the simulation study and the application in Section 4 of the main text.

\section{Proofs} \label{suppl:section:Proofs}

We use the notation $a \lesssim b$ and $a \gtrsim b$ to denote $\exists K < \infty : a \le K b$ and $\exists K' < \infty : a \ge K' B$, respectively, with the understanding that, if we are taking limits that depend on $n$,  $K$ and $K'$ do not depend on $n$. Unless stated otherwise, vector norms are Euclidean norms; that is, $\lVert x \lVert^2 = \sum_{i = 1}^p x_i^2$ for $x \in \mathbb{R}^p$. We use the $O_p(1)$ notation in our proofs, which we define below.

\begin{definition} \label{def:Op1} Let $X_n$ be a random variable taking values in $\mathbb{R}^n$. Then $X_n \in O_p(1)$ if, for all $\epsilon > 0$, there exist $n_\epsilon, M_\epsilon < \infty$ so that for all $n \ge n_\epsilon$,
$
P [ \, \lVert X_n \rVert^2 > M^2_\epsilon]  < \epsilon.
$
\end{definition}

\subsection{Auxiliary results for Proposition 3}


 

\begin{proposition} \label{prop:tailbounds} [\cite{birge2001alternative}, \cite{laurent2000adaptive}]
Let $X \sim \chi^2_m(\lambda)$, $Y \sim \chi^2_n$ and $x \ge 0$. Then, 
\begin{align*}
    P(X \le m + \lambda - 2 \sqrt{(m+ 2 \lambda) x}) &\le \exp(-x) \\
    P(Y  \ge n + 2 \sqrt{n x} + 2x ) &\le \exp(-x)
\end{align*}
\end{proposition}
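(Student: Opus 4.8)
Both inequalities are classical concentration bounds (the Laurent--Massart / Birg\'e--Massart estimates quoted in the statement), and the natural route is the Cram\'er--Chernoff method applied to the Laplace transform of a (noncentral) chi-squared variate. Recall that for $X\sim\chi^2_m(\lambda)$ and $t<1/2$,
\[
E[e^{tX}]=(1-2t)^{-m/2}\exp\!\left(\frac{\lambda t}{1-2t}\right),\qquad\text{so}\qquad E[e^{-tX}]=(1+2t)^{-m/2}\exp\!\left(-\frac{\lambda t}{1+2t}\right)\quad(t>0),
\]
and in the central case $Y\sim\chi^2_n$ the latter reduces to $E[e^{tY}]=(1-2t)^{-n/2}$. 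The plan is, for each tail, to apply Markov's inequality to $e^{\pm t(\cdot)}$, bound the resulting cumulant generating function by a simple function of $t$ via elementary convexity inequalities for $\log(1\pm2t)$, and then choose $t$ so that the exponent is exactly $-x$.

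For the lower tail, put $a=m+\lambda-u$ with $u=2\sqrt{(m+2\lambda)x}$ and apply Markov's inequality to $e^{-tX}$ with $t>0$:
\[
\log P(X\le a)\ \le\ ta-\tfrac{m}{2}\log(1+2t)-\frac{\lambda t}{1+2t}.
\]
Using $-\log(1+2t)\le-2t+2t^2$ (which follows from $\log(1+s)\ge s-s^2/2$ for $s\ge0$) and $1/(1+2t)\ge1-2t$, the right-hand side is at most $-tu+(m+2\lambda)t^2$; this quadratic in $t$ is minimized at $t=u/\big(2(m+2\lambda)\big)$ with value $-u^2/\big(4(m+2\lambda)\big)=-x$, giving $P(X\le a)\le e^{-x}$. (When $a<0$ the statement is trivial since $X\ge0$.)

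For the upper tail, put $b=n+2\sqrt{nx}+2x$ and apply Markov's inequality to $e^{tY}$ with $0<t<1/2$:
\[
\log P(Y\ge b)\ \le\ -tb-\tfrac{n}{2}\log(1-2t)\ =\ -t(b-n)+\tfrac{n}{2}\big(-\log(1-2t)-2t\big).
\]
The elementary inequality $-\log(1-2t)-2t\le 2t^2/(1-2t)$ on $(0,1/2)$ turns this into $-t(b-n)+nt^2/(1-2t)$ --- i.e.\ $Y-n$ is sub-gamma with variance factor $2n$ and scale $2$ --- after which one can either invoke the standard sub-gamma tail estimate or simply substitute $t=\sqrt{x}/(\sqrt{n}+2\sqrt{x})$, so that $1-2t=\sqrt{n}/(\sqrt{n}+2\sqrt{x})$; using $\sqrt{nx}+x=\sqrt{x}(\sqrt{n}+\sqrt{x})$, a one-line computation collapses the exponent to exactly $-x$.

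Nothing here is deep, so I do not expect a genuine obstacle: the result is textbook material, cited from the stated references. The only point requiring care is to choose the convexity bounds and the value of $t$ sharply enough that the Chernoff optimization returns precisely $e^{-x}$ together with the clean closed-form thresholds in the statement, rather than a qualitatively identical but untidy bound.
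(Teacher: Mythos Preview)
Your proof is correct. The paper does not prove this proposition at all; it is stated as an auxiliary result quoted directly from \cite{birge2001alternative} and \cite{laurent2000adaptive}, with no argument given. What you have written is essentially the standard Cram\'er--Chernoff derivation that appears in those references, so there is no discrepancy to discuss --- you have simply supplied the proof that the paper omits by citation. The only remark worth making is that your convexity bounds and choices of $t$ are exactly the ones needed to recover the sharp thresholds; both computations check out line by line.
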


\begin{proposition} \label{prop:tailR2} Let $R^2_i \sim \mathrm{Beta}(p/2, (b_i-p-p_0)/2
)$ for $b_i > p_0 + p + 2$. Then, if $p \ge 2$:
$$
 P(R^2_i > k_i) \le \frac{ 2 (1-k_i)^{(b_i-p-p_0)/2}}{(b_i-p-p_0)  \,  \text{B}(p/2,(b_i-p-p_0)/2) }.
$$ 
If $p = 1$:
$$
 P(R^2_i > k_i) \le  \frac{1}{\text{B}(1/2,(b_i-1-p_0)/2)}  \left(\frac{(1-k_i)^{b_i-p_0-2} \log(1/k_i)}{b_i-p_0 -2}\right)^{1/2},
$$
where $\text{B}( \cdot, \cdot)$ is the Beta function.
\begin{proof}
This result is straightforward to prove, but we could not find it in the literature. Assuming $p \ge 2$:
 \begin{align*}
    P(R^2_i > k_i) &= \frac{1}{\text{B}(p/2,(b_i-p-p_0)/2)}  \int_{k_i}^1 x^{p/2 -1} (1-x)^{(b_i-p-p_0)/2-1} \, \text{d} x \\
    &\le_{(\text{if } p \ge 2)}  \frac{1}{\text{B}(p/2,(b_i-p-p_0)/2)}  \int_{k_i}^1 (1-x)^{(b_i-p-p_0)/2-1} \, \text{d} x \\ 
    &= \frac{ 2 (1-k_i)^{(b_i-p-p_0)/2}}{(b_i-p-p_0)  \,  \text{B}(p/2,(b_i-p-p_0)/2) }.
\end{align*}
If $p = 1$, using the Cauchy-Schwarz inequality:
\begin{align*}
    P(R^2_i > k_i) &= \frac{1}{\text{B}(1/2,(b_i-p_0-1)/2)}  \int_{k_i}^1 x^{-1/2} (1-x)^{(b_i-p_0-1)/2-1} \, \text{d} x \\
    &\le \frac{1}{\text{B}(1/2,(b_i-p_0-1)/2)}   \left( \int_{k_i}^1 x^{-1} \, \mathrm{d} x \right)^{1/2} \, \left( \int_{k_i}^1 (1-x)^{b_i-p_0-3} \, \mathrm{d} x \right)^{1/2} \\
    &= \frac{1}{\text{B}(1/2,(b_i-p_0-1)/2)}  \left(\frac{(1-k_i)^{b_i-p_0-2} \log(1/k_i)}{b_i-p_0-2}\right)^{1/2}, 
\end{align*}
as required.
\end{proof}
\end{proposition}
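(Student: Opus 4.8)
The plan is to work directly with the incomplete Beta integral and control the integrand on the interval $[k_i,1]$. Writing out the tail probability explicitly, I would start from
\[
P(R^2_i > k_i) = \frac{1}{\text{B}(p/2,(b_i-p-p_0)/2)}\int_{k_i}^1 x^{p/2-1}(1-x)^{(b_i-p-p_0)/2-1}\,\mathrm{d}x,
\]
and then treat the two regimes $p\ge 2$ and $p=1$ separately, since the exponent $p/2-1$ on $x$ changes sign between them.

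For $p\ge 2$ the exponent $p/2-1$ is nonnegative, so on $[k_i,1]\subseteq[0,1]$ we have $x^{p/2-1}\le 1$. Dropping that factor reduces the integral to $\int_{k_i}^1 (1-x)^{(b_i-p-p_0)/2-1}\,\mathrm{d}x$, which has the elementary closed form $\tfrac{2}{b_i-p-p_0}(1-k_i)^{(b_i-p-p_0)/2}$ (this is where the hypothesis $b_i>p_0+p+2$, guaranteeing a positive exponent, is used to ensure the antiderivative behaves as claimed). Substituting back gives the first bound.

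For $p=1$ the factor $x^{-1/2}$ blows up near $0$ and cannot be bounded by a constant, so the clean trick above fails; this is the main (and really the only) obstacle. I would handle it with the Cauchy--Schwarz inequality, splitting $x^{-1/2}(1-x)^{(b_i-p_0-1)/2-1}$ as the product of $x^{-1/2}$ and $(1-x)^{(b_i-p_0-3)/2}$ so that
\[
\int_{k_i}^1 x^{-1/2}(1-x)^{(b_i-p_0-3)/2}\,\mathrm{d}x \le \left(\int_{k_i}^1 x^{-1}\,\mathrm{d}x\right)^{1/2}\left(\int_{k_i}^1 (1-x)^{b_i-p_0-3}\,\mathrm{d}x\right)^{1/2}.
\]
Both integrals on the right are elementary: the first is $\log(1/k_i)$ and the second is $\tfrac{1}{b_i-p_0-2}(1-k_i)^{b_i-p_0-2}$, again using $b_i>p_0+p+2$ for convergence. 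Multiplying these, taking the square root, and dividing by $\text{B}(1/2,(b_i-p_0-1)/2)$ yields the second bound. No deeper ideas are needed; the only care required is bookkeeping of the exponents and verifying that the sample-size condition makes every power integrable.
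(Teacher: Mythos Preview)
Your proposal is correct and follows essentially the same argument as the paper: bounding $x^{p/2-1}\le 1$ when $p\ge 2$ and applying Cauchy--Schwarz to the factors $x^{-1/2}$ and $(1-x)^{(b_i-p_0-3)/2}$ when $p=1$, with the same elementary evaluations of the resulting integrals.
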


\subsection{Proof of Proposition 1 in main text}

First of all, note that
$$
\log \tilde{B}_{10} = \sum_{i = 1}^M \log B^c_{10,i}/M + \eta, \, \, \, \log \tilde{I}_{10} = \sum_{i = 1}^M \log I^c_{10,i}/M + \eta,
$$
and recall that $\log {B}^\ast_{10} = (\log \tilde{B}_{10} \vee L) \wedge U$ and $\log {I}^\ast_{10} = (\log \tilde{B}_{10} \vee L) \wedge U.$ Given Assumption 3, the censoring is asymptotically irrelevant. In other words, for any fixed $a \in \mathbb{R}$, there exists a finite $N$ such that $n \ge N$ implies that $P(\log B^\ast_{10} > a) = P(\log \tilde{B}_{10} > a)$ and  $P(\log I^\ast_{10} > a) = P(\log \tilde{I}_{10} > a).$ For that reason, we show consistency by studying the asymptotic behavior of $\tilde{B}_{10}$ and $\tilde{I}_{10}$. Asymptotically, the effect of $\eta$ is irrelevant as well since, by Assumption 4, $\eta \rightarrow_P 0$.

It remains to show that averages of censored logarithms of Bayes factors and information criteria are consistent. We prove consistency by cases. First, we show consistency when $H_0$ is true. Then, we show consistency when $H_1$ is true.

\textbf{Case $H_0$ is true:} We prove consistency for Bayes factors first. The proof for information criteria is very similar.

Using the inequality $(1+x/n)^n \le \exp(x)$, we can bound the non-private Bayes factor as follows:
\begin{align*}
B_{10,i} &= \int_0^\infty (g_i+1)^{-p/2} \left[1 + \frac{g_i R_i^2}{1+g_i(1-R^2_i)} \right]^{(b_i-p_0)/2} \, \pi_i(g_i) \, \nu(\mathrm{d} g_i) \\
&\le \exp \left( \frac{b_i-p_0}{2} \frac{R^2_i}{1-R^2_i} \right) \int_0^\infty (g_i+1)^{-p/2} \, \pi_i(g_i) \, \nu(\mathrm{d} g_i)  \\ &= \exp \left( \frac{b_i-p_0}{2} \frac{R^2_i}{1-R^2_i} \right) \, I_{\pi_i}.
\end{align*}
Let $a \in \mathbb{R}$ be a fixed constant. By Assumption 3, there exists a finite $N$ so that $n \ge N$ implies $L  < a  < U$. Using a union bound, for $n \ge N$:
\begin{align*}
      P \left( \frac{1}{M} \sum_{i=1}^M \log B^c_{10,i} > a \right) &\le \sum_{i=1}^M P( \log B^c_{10,i} > a) \\ 
      &\le \sum_{i=1}^M P(R^2_i > k_i),
\end{align*}
where $k_i = t_i/(1+t_i)$ for $t_i = 2 [a - \log I_{\pi_i}]/(b_i-p_0)$, which is positive for $b_i$ large enough since $\lim_{n \rightarrow \infty} \sup_{i \in 1:M} \int_{0}^\infty  b_i^{p/2} \, (g_i+1)^{-p/2} \pi_i(g_i) \, \nu(\mathrm{d} g_i) < \infty$ by Assumption 6.

Let $p \ge 2$. Under $H_0$, $R^2_i \sim \mathrm{Beta}(p/2, (b_i - p - p_0)/2)$. We show that $\sum_{i=1}^M P(R^2_i > k_i)$ goes to zero with the help of Proposition~\ref{prop:tailR2} and the inequality for the beta function  $1/\mathrm{B}(x,y) \le (x+y)^{x+y-1}/(x^{x-1} y^{y-1})$ (see e.g. Equation 4 in \cite{Greni2015InequalitiesFT}). We also use the inequality $(1+x/k)^{-k} \le \exp(-x)(1-x^2/k)^{-1}$ for $x =  b_i (a - \log I_{\pi_i})/(b_i-p_0)$ and $k = b_i/2 $, which is valid for $|x| \le k$. The condition is satisfied when the sample size is large enough by Assumption 6.  

Putting it all together:
\begin{align*}
\sum_{i=1}^M P(R^2_i > k_i)  &\le  \sum_{i=1}^M \frac{ 2 (1-k_i)^{(b_i-p-p_0)/2}}{(b_i-p-p_0)  \,  \text{B}(p/2,(b_i-p-p_0)/2) } \\
&\lesssim  \sum_{i=1}^M { (b_i-p-p_0)^{p/2-1} (1-k_i)^{b_i/2}}  \\
&=\sum_{i=1}^M \frac{ (b_i-p-p_0)^{p/2-1} }{[ 1+2[a-\log I_{\pi_i}]/(b_i-p_0)]^{b_i/2}} \\
&\lesssim \sum_{i=1}^M { (b_i-p-p_0)^{p/2-1} } I_{\pi_i} \\
&\lesssim \sup_{i \in 1:M} \frac{M}{b_i-p-p_0} \,  b_i^{p/2} I_{\pi_i} \\
&\rightarrow 0.
\end{align*}
In the last steps, we used Assumptions 2 and 6. We can use essentially the same argument to show consistency for information criteria under $H_0$ when $p \ge 2$. In this case, $k_i = 1 - b_i^{-\rho_i/b_i} \exp(-2a/b_i)$ and
\begin{align*}
   P\left( \sum_{i=1}^M \log I^c_{10,i}/M > a \right) &\lesssim \sum_{i=1}^M P(R^2_i > k_i) \\
   &\lesssim \sum_{i=1}^M b_i^{(p-\rho_i)/2-1}  \\
   &\lesssim \sup_{i \in 1:M} \frac{M}{b_i} \, b_i^{(p-\rho_i)/2} \\
   &\rightarrow_{\rho_i \ge p} 0.
\end{align*}

Let $p = 1$. We can use the same strategy we used for $p \ge 2$, but with the appropriate bounds and assumptions. Once again, we can use the inequalities $1/\mathrm{B}(x,y) \le (x+y)^{x+y-1}/(x^{x-1} y^{y-1})$, Proposition~\ref{prop:tailR2}, and assumptions as needed. 

More explicitly: 
\begin{align*}
\sum_{i=1}^M P(R^2_i > k_i)  &\le  \sum_{i=1}^M \frac{1}{\text{B}(1/2,(b_i-1-p_0)/2)}  \left(\frac{(1-k_i)^{b_i-p_0-2} \log(1/k_i)}{b_i-p_0-2}\right)^{1/2} \\
&\lesssim  \sum_{i=1}^M {  \left[{(1-k_i)^{b_i-p_0-2} \log(1/k_i)} \right]^{1/2} } \\  
&\lesssim  \sum_{i=1}^M  I_{\pi_i}  \log(1/k_i)^{1/2} \\
&\lesssim \sum_{i=1}^M  I_{\pi_i}  [\log(b_i/\log b_i) ]^{1/2} \\
&\lesssim \sum_{i=1}^M  b_i^{-1/2} [\log(b_i/\log b_i) ]^{1/2} \\
&\lesssim M \sup_{i \in 1:M}  \sqrt{\log(b_i/\log b_i)/b_i } \\
&\rightarrow 0,
\end{align*}
as required. Using the same argument, we can prove consistency under $H_0$ for information criteria when $p =1$:
\begin{align*}
    P\left( \sum_{i=1}^M \log I^c_{10, i}/M > a \right) &\lesssim \sum_{i=1}^M P(R^2_i > k_i) \\
    &\lesssim \sum_{i=1}^M (1-k_i)^{(b_i-3)/2} \, \log(1/k_i)^{1/2} \\
    &\lesssim_{\rho_i \ge 1} b_i^{-\rho_i - \rho_i/b_i} \, \sqrt{\log b_i} \\
    &\lesssim \frac{M \sqrt{\log b_i}}{b_i} \, b_i^{1-\rho_i(1 - 1/b_i)} \\
    &\rightarrow 0,
\end{align*}
which competes the proof.

\textbf{Case $H_1$ is true:}  First, we show that
$$
\lim_{n\rightarrow\infty} P\left( \forall i \in 1:M , R^2_i \ge b_i^{-0.5} \right) = 1.
$$
Under $H_1$, $R^2_i \stackrel{\text{ind.}}{\sim} \mathrm{noncentralBeta}(p/2, (b_i-p-p_0)/2, \lambda_i)$, where $\lambda_i = \beta' X_i' X_i \beta/\sigma^2$ is the noncentrality parameter of the noncentral beta distribution (type I). In other words, we can write
$$
R^2_i = \frac{X_i}{X_i + Y_i}; \, \, X_i \sim \chi^2_{p}(\lambda_i) \perp Y_i \sim \chi^2_{b_i-p-p_0}.
$$
Let $a_i = b_i^{-0.5}/(1-b_i^{-0.5})$. Then,
$$
R^2_i > b_i^{-0.5} \Leftrightarrow \frac{X_i}{Y_i} > a_i
$$
Let $k_i = a_i [(b_i-p-p_0)+2\sqrt{(b_i-p-p_0)b_i^{0.5}} + 2 b_i^{0.5}]$. Then,
$$
P(X_i/ Y_i > a_i) \ge P(X_i > k_i) P(Y_i < k_i/a_i).
$$
Provided that $\lambda_i + p > k_i$, Proposition~\ref{prop:tailbounds} ensures that 
$$
P(X_i > k_i) \ge 1 - \exp \left( - \frac{(p+\lambda_i -k_i)^2}{4 (p+2 \lambda_i)} \right).
$$
The condition $\lambda_i + p > k_i$ is satisfied for a large enough sample size for all $i \in \{1,  2, \, ... \, , M\}$ under Assumption 5. On the other hand, also by Proposition~\ref{prop:tailbounds},
$$
P(Y_i < k_i/a_i)  \ge 1 - \exp ( - b_i^{0.5}),
$$
Thus, under Assumption 2:
\begin{align*}
P \{ \forall i \in 1:M , R^2_i \ge b_i^{-0.5} \} &\ge  \prod_{i = 1}^M \{ P(X_i > k_i) P(Y_i < k_i/a_i) \} \\
&\ge \inf_{i \in 1:M}  \left\{ \left[1 - \exp \left( - \frac{(p+\lambda_i -k_i)^2}{4 (p+2 \lambda_i)} \right)\right]^{b_i} \left[ 1 - \exp ( - b_i^{0.5}) \right]^{b_i} \right\}^{M/b_i} \\
&\rightarrow 1,
\end{align*}
as required. 

Conditioning on the high probability event $\{ \forall i \in 1:M , R^2_i \ge b_i^{-0.5} \}$:
\begin{align*}
\frac{1}{M} \sum_{i=1}^M \log B^c_{10,i} &\ge \frac{1}{M} \sum_{i=1}^M \log \int_0^\infty \frac{(g_i+1)^{(b_i-p-p_0)/2}}{[1+g_i (1- b_i^{-0.5})]^{(b_i-p_0)/2}} \pi_i(g_i) \, \nu(\mathrm{d} g_i) \\
&\gtrsim \log  \inf_{i \in 1:M} \left[\frac{b_i +1}{b_i(1-b_i^{-0.5}) +1}\right]^{(b_i-p_0)/2}   \int_{b_i}^\infty (g_i+1)^{-p/2}   \pi_{i}(g_i) \, \nu( \mathrm{d} g_i ) \\
&\rightarrow \infty
\end{align*}
and, with high probability, 
\begin{align*}
    \frac{1}{M} \sum_{i=1}^M \log I^c_{10, i} &\gtrsim \inf_{i \in 1:M} \log \left( \frac{b_i^{-\rho_i/2}}{(1- b_i^{-0.5})^{b_i/2}} \right) \\
    &\rightarrow \infty,
\end{align*}
which completes the proof.

\subsection{Auxiliary results for Proposition 2}

\begin{proposition}  (Hanson-Wright inequality; see e.g. \cite{rudelson2013hanson}). Let $W$ be a random vector in $\mathbb{R}^n$. Let $\{W_i\}_{i=1}^n$ be the components of $W$. Let $\mathbb{E}(W_i) = 0$ and $0 < K < \infty$ be a constant so that $\mathbb{E}[\exp(W_i^2/K^2)] < 2$. Let $A$ be a $n \times n$ real matrix. Then, for a constant $c > 0$, 
$$
P [ | W' A W - \mathbb{E}(W'A W) | > t ] \le 2 \exp\left\{ - c \frac{t}{K^2} \min \left(  \frac{t}{K^2 \lVert A \rVert_F^2} , \frac{1}{ \lVert A \rVert_{op}} \right)  \right\},
$$
where $\lVert A \rVert_F = \sqrt{\mathrm{tr}(A'A)}$ is the Frobenius norm of $A$ and $\lVert A \rVert_{op} = \sup_{\lVert x \rVert \neq 0} \lVert A x \rVert/ \lVert x  \rVert$. 
\end{proposition}

\begin{proposition} (Convergence of $G^\ast_r$ and $G^{\ast \ast}_r$) \label{prop:convG}
 Under the assumptions of Proposition 2 in the main text, $G^{\ast}_r/n \rightarrow_P  \overline{G}_{\infty}$ and $G^{\ast \ast}_r/n \rightarrow_P  \overline{G}_{\infty},$ a constant symmetric matrix. 
\end{proposition}

\begin{proof} Let $G^{\ast}/n = G/n + E/n$. The entries $E_{ij}/n$ converge to zero in probability by Assumption 3.

We show that $G/n \rightarrow_P \overline{G}_{\infty}$ with  $\overline{G}_{\infty}$ symmetric. Recall that
$$
G/n = \begin{bmatrix} V'V/n & V'Z/n \\ 
Z'V/n & Z'Z/n
\end{bmatrix}.
$$
We establish the convergence of $G/n$ to $\overline{G}_{\infty}$ by establishing the convergence of the blocks. All the probability statements, expectations, and variances in this proof are given $X_0$ and $V$. 

By Assumption 5, $\lim_{n \rightarrow \infty} V'V/n = S_1$. Then,
\begin{align*}
    \lim_{n \rightarrow \infty} \mathbb{E}(V'Z/n) &= \lim_{n \rightarrow \infty} V'V_T \beta_T/n = S_2 \beta_T \\
    \lim_{n \rightarrow \infty} \mathrm{Var}(V'Z/n) &= \lim_{n \rightarrow \infty} \frac{\sigma^2_T}{n^2} V'V = 0_{p \times p} \\
    V'Z/n &\rightarrow_{P} S_2 \beta_T,
\end{align*}
where $S_2$ is a submatrix of $S_1$.

Let us study the asymptotic behavior of $Z'Z/n$. We can write 
\begin{align*}
Z'Z/n &= Q/n + 2 \beta_T' V_T' Y/n - \beta_T' V_T' V_T \beta_T/n, \\
Q &= [Y - \mathbb{E}(Y)]'(I_n - P_{X_0})[Y - \mathbb{E}(Y)].
\end{align*}
On the one hand,
\begin{align*}
\lim_{n \rightarrow \infty} \mathbb{E}(2 \beta_T' V_T' Y/n - \beta_T' V_T' V_T \beta_T/n) &= \lim_{n \rightarrow \infty}  \beta_T' V_T' V_T \beta_T/n = \beta_T' S_3 \beta_T \\
\lim_{n \rightarrow \infty} \mathrm{Var}(2 \beta_T' V_T' Y/n - \beta_T' V_T' V_T \beta_T/n) &= \lim_{n \rightarrow \infty} \frac{4 \sigma^2_T}{n^2} \beta_T' V_T' V_T \beta_T  = 0 \\
2 \beta_T' V_T' Y/n - \beta_T' V_T' V_T \beta_T/n &\rightarrow_P \beta_T' S_3 \beta_T,
\end{align*}
where $S_3$ is a submatrix of $S_1$. 

We can show that $Q/(n-p_0) \rightarrow_P \sigma^2_T$ with the Hanson-Wright inequality, which in turn implies that $Q/n \rightarrow_P \sigma^2_T$. We have $\mathbb{E}[Q/(n-p_0)] = \sigma^2_T$. Define $W = Y -  \mathbb{E}(Y)$. Then, $\mathbb{E}(W) = 0_n$ and $W_i^2 $ are uniformly bounded since both $Y$ and $\mathbb{E}(Y)$ are by Assumption 1. Therefore, we can pick a finite constant $K$ satisfying $\mathbb{E}[\exp(W_i^2/K^2)] < 2$. 

Applying the Hanson-Wright inequality, for any  given $\epsilon > 0$,
$$
P[ |Q/(n-p_0) - \sigma^2_T| > \epsilon] \le 2 e^{ - 2 c \epsilon \min \left[ {\epsilon}(K^2  \lVert (I - P_{X_0})/(n-p_0) \rVert_F)^{-1},  (\lVert (I - P_{X_0})/(n-p_0) \rVert_{op})^{-1} \right]} \rightarrow_{n \rightarrow \infty} 0,
$$
because $\lVert (I - P_{X_0})/(n-p_0) \rVert_{op} =  \lVert (I - P_{X_0})/(n-p_0) \rVert_F = 1/\sqrt{n-p_0} \rightarrow 0$ as $n \rightarrow \infty$. This implies $Q/n \rightarrow_P \sigma^2_T$ and $Z'Z/n \rightarrow_P \sigma^2_T + \beta_T' S_3 \beta_T$. 

We have shown that
$$
G/n \rightarrow_P G/n = \begin{bmatrix} S_1 & S_2 \beta_T \\ 
\beta_T' S_2' & \sigma^2_T + \beta_T' S_3 \beta_T
\end{bmatrix} = \overline{G}_{\infty}.
$$
Therefore, we have $G^\ast/n \rightarrow_P \overline{G}_{\infty}$.

In the case of the non-thresholded matrix $G^\ast_r$, we have established that $G^\ast_r/n = G^\ast/n + r/n I_{p+1} \rightarrow_P \overline{G}_{\infty}$ since $r/n \rightarrow 0$ by Assumption 4. In the case of $G^{\ast \ast}_r$, there is an indicator that can hard-threshold off-diagonal elements. Let $G^{\ast \ast}_{ij}/n =  G^\ast_{ij}/n \, \mathbbm{1}(i = j \, \mathrm{or} \, |G^{\ast}_{ij}|/n \ge e_{\lambda}/n)$ be the $(i,j)$-th entry of $G^{\ast \ast}$. On the one hand, $\lim_{n \rightarrow \infty} e_{\lambda}/n = 0$ because the variance of $E_{ij}$ is finite and does not depend on $n$.  For $i  = j$ the indicator is equal to 1. For $i \neq j$:
\begin{align*}
    E [ \mathbbm{1}( |G^{\ast}_{ij}|/n \ge e_{\lambda}/n) ] &= P [|G^{\ast}_{ij}|/n \ge e_{\lambda}/n] \rightarrow 1 \\
    \mathrm{Var}[\mathbbm{1}( |G^{\ast}_{ij}|/n \ge e_{\lambda}/n] &=  P [|G^{\ast}_{ij}|/n \ge e_{\lambda}/n] -  P [|G^{\ast}_{ij}|/n \ge e_{\lambda}/n]^2 \rightarrow 0 \\
    \mathbbm{1}( |G^{\ast}_{ij}|/n \ge e_{\lambda}/n) &\rightarrow_P 1.
\end{align*}
By Slutzky's lemma, we have that $G^{\ast \ast}_{ij}/n \rightarrow \overline{G}_{\infty, ij}$ for all $i,j$. This is enough to show that $G^\ast/n \rightarrow \overline{G}_{\infty}$. Finally, since $G^{\ast \ast}_r = G^{\ast \ast} + r I_{p+1}$ and $r/n \rightarrow 0$ by Assumption 4, we have $G^{\ast \ast}_r/n \rightarrow_P \overline{G}_{\infty}$, as required. 

\end{proof}

\begin{proposition} \label{prop:noisyR2} (Convergence of noisy $R^2$) Given $G^\ast_r$ or $G^{\ast \ast}_r$ and a model-indexing vector $\gamma \in \{0,1\}^p$, we can construct a differentially private version of $R^2_\gamma$, denoted $R^{2, \ast}_{\gamma}$. Let $T \in \{0, 1\}^p$ be the index of the true model. Under the regularity conditions stated in Proposition~{\ref{prop:convG}} and assuming that $Z'Z$ is not equal to zero almost surely, the following are true:
\begin{enumerate}
    \item If $\gamma$ does not nest $T$ (i.e. if there exists $i$ such that $T_i = 1$ and $\gamma_i = 0$), then $R^{2, \ast}_{\gamma} \rightarrow R^{2}_{\gamma, \infty}$ and  $R^{2, \ast}_{T} \rightarrow R^{2}_{T, \infty}$ with $R^{2}_{\gamma, \infty} < R^{2}_{T, \infty}$. 
    \item If the $T = 0_p$ is the null model, then $n R^{2, \ast}_{\gamma}$ is in $O_p(1)$.
    \item If $\gamma$ nests $T$ (i.e. if $T_i = 1$ implies $\gamma_i = 1$), then $[(1-R^{2, \ast}_T)/(1-R^{2,\ast}_\gamma)]^{(n-p_0)/2}$ is in $O_p(1)$.
\end{enumerate}
\end{proposition}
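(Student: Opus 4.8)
The plan is to push everything through Proposition~\ref{prop:convG}. Since $R^{2,\ast}_\gamma$ is assembled from $G^{\ast\ast}_r$ by taking the sub-blocks indexed by $\gamma$ and by $Z$, we can write $R^{2,\ast}_\gamma = (G^{\ast\ast}_r)_{Z\gamma}\,[(G^{\ast\ast}_r)_{\gamma\gamma}]^{-1}(G^{\ast\ast}_r)_{\gamma Z}/(G^{\ast\ast}_r)_{ZZ}$, which is unchanged if we divide $G^{\ast\ast}_r$ by $n$; hence $R^{2,\ast}_\gamma = f_\gamma(G^{\ast\ast}_r/n)$ for a fixed map $f_\gamma$ that is continuous on the cone of positive-definite matrices. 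By Proposition~\ref{prop:convG}, $G^{\ast\ast}_r/n \rightarrow_P \overline{G}_\infty$, and computing the Schur complement of the $V$-block of $\overline{G}_\infty$ shows it equals $\sigma^2_T>0$, so $\overline{G}_\infty$ is positive-definite and $f_\gamma$ is continuous at $\overline{G}_\infty$. The continuous mapping theorem then gives $R^{2,\ast}_\gamma \rightarrow_P R^2_{\gamma,\infty} := f_\gamma(\overline{G}_\infty)$ and, in particular, $R^{2,\ast}_T \rightarrow_P R^2_{T,\infty}$. It remains to establish the strict inequality in the non-nested case, the $O_p(1/n)$ rate in the null case, and the exponential-scale control in the nested case.

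For the non-nested case, reading the entries of $\overline{G}_\infty$ off Proposition~\ref{prop:convG} shows that $R^2_{\gamma,\infty}$ equals $\beta_T' S_{T\gamma}(S_{\gamma\gamma})^{-1}S_{\gamma T}\beta_T$ divided by $\sigma^2_T + \beta_T' S_{TT}\beta_T$, where $S_{\cdot\cdot}$ are sub-blocks of $S_1$ and the denominator does not depend on $\gamma$. Thus $R^2_{\gamma,\infty}<R^2_{T,\infty}$ is equivalent to $\beta_T' S_{T\gamma}(S_{\gamma\gamma})^{-1}S_{\gamma T}\beta_T < \beta_T' S_{TT}\beta_T$. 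I would prove this by interpreting the left-hand side as the variance of the best linear predictor of $v_T'\beta_T$ from $v_\gamma$, where $v$ is a random vector with second-moment matrix $S_1$: this is at most $\mathrm{Var}(v_T'\beta_T)$, with equality only if $v_T'\beta_T$ is almost surely linear in $v_\gamma$. If $\gamma$ does not nest $T$, choose $j\in T\setminus\gamma$; since $j$ indexes a truly active predictor, $\beta_{T,j}\neq 0$, and since $S_1$ is positive-definite, $v_j$ is not almost surely linear in the other coordinates, so the residual of $v_T'\beta_T$ on $v_\gamma$ has strictly positive variance and the inequality is strict. For the null case $T=0_p$, the limit $\overline{G}_\infty$ has a zero cross-block and $(\overline{G}_\infty)_{ZZ}=\sigma^2_T$, so $R^2_{\gamma,\infty}=0$ and we need a rate. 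In the formula for $R^{2,\ast}_\gamma$ the denominator $(G^{\ast\ast}_r)_{ZZ}$ is $\Theta_p(n)$ and $[(G^{\ast\ast}_r)_{\gamma\gamma}]^{-1}$ is $O_p(1/n)$ in operator norm; for the cross vector, each of its entries is either thresholded to $0$ or equals $V_{(\cdot,i)}'Z + E_{iZ}$, and under the null $E[V_{(\cdot,i)}'Z]=0$ with variance $O(n)$ by boundedness of $D$, while $E_{iZ}=O_p(1)$ because the Laplace noise has a fixed scale and the Wishart noise a fixed degree of freedom; hence $(G^{\ast\ast}_r)_{\gamma Z}=O_p(\sqrt n)$ and $R^{2,\ast}_\gamma=O_p(\sqrt n)\,O_p(1/n)\,O_p(\sqrt n)/\Theta_p(n)=O_p(1/n)$.

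For the nested case $\gamma\supseteq T$, write $\mathrm{RSS}^\ast_\gamma$ for the Schur complement of $(G^{\ast\ast}_r)_{\gamma\gamma}$ in the sub-block indexed by $\gamma\cup\{Z\}$, so that $1-R^{2,\ast}_\gamma=\mathrm{RSS}^\ast_\gamma/(G^{\ast\ast}_r)_{ZZ}$ and $(1-R^{2,\ast}_T)/(1-R^{2,\ast}_\gamma)=\mathrm{RSS}^\ast_T/\mathrm{RSS}^\ast_\gamma$, since the normalizer cancels. Because $G^{\ast\ast}_r$ is positive-definite, Schur-complement monotonicity (adding conditioning variables cannot increase the conditional variance) gives $\mathrm{RSS}^\ast_\gamma\le\mathrm{RSS}^\ast_T$, so the ratio is $\ge 1$; also $\mathrm{RSS}^\ast_T/n$ and $\mathrm{RSS}^\ast_\gamma/n$ both converge in probability to $\sigma^2_T>0$. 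Writing $S=\gamma\setminus T$, the quotient formula for Schur complements gives $\mathrm{RSS}^\ast_T-\mathrm{RSS}^\ast_\gamma=(G^{\ast\ast}_r)_{ZS\mid T}[(G^{\ast\ast}_r)_{SS\mid T}]^{-1}(G^{\ast\ast}_r)_{SZ\mid T}$, a nonnegative quadratic form at most $\lVert(G^{\ast\ast}_r)_{SZ\mid T}\rVert^2/\lambda_{\min}((G^{\ast\ast}_r)_{SS\mid T})$. Here $\lambda_{\min}((G^{\ast\ast}_r)_{SS\mid T})=\Theta_p(n)$ (it is $n$ times the smallest eigenvalue of a matrix converging to a positive-definite limit), and the noiseless part of $(G^{\ast\ast}_r)_{SZ\mid T}$ is $V_S'(I-P_{V_T})Z$, which under the true model has mean zero and variance $O(n)$, i.e.\ is $O_p(\sqrt n)$. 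Assuming the noise and regularization perturbations of this Schur block are also $O_p(\sqrt n)$, we get $\mathrm{RSS}^\ast_T-\mathrm{RSS}^\ast_\gamma=O_p(1)$, hence $\mathrm{RSS}^\ast_T/\mathrm{RSS}^\ast_\gamma=1+O_p(1/n)$, and then, using $1+t\le e^{t}$ together with $\mathrm{RSS}^\ast_T\ge\mathrm{RSS}^\ast_\gamma$, $(\mathrm{RSS}^\ast_T/\mathrm{RSS}^\ast_\gamma)^{(n-p_0)/2}\le \exp((n-p_0)(\mathrm{RSS}^\ast_T-\mathrm{RSS}^\ast_\gamma)/(2\,\mathrm{RSS}^\ast_\gamma))=\exp(O_p(1))=O_p(1)$.

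The main obstacle is exactly the bracketed assumption in the last step: controlling how the added noise $E$ and the ridge term $rI_{p+1}$ propagate through the inner inverse $(V_T'V_T+E_{TT}+rI)^{-1}$ that defines the conditional Schur block $(G^{\ast\ast}_r)_{SZ\mid T}$, and showing the resulting extra ``explained mass'' stays $O_p(n)$. The boundedness assumption makes each entry of $E$ stochastically bounded, and $r/n\to 0$ keeps $\lVert(V_T'V_T)^{-1}(E_{TT}+rI)\rVert$ negligible so the Neumann expansion is valid; the care is in checking that the leading $r$-dependent correction to $(G^{\ast\ast}_r)_{SZ\mid T}$ is $o_p(\sqrt n)$, which is the only genuinely quantitative point in the proof and the place where the growth condition on $r$ is actually used. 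Everything else — the convergence statements, the non-nested strict inequality, and the null-case rate — is routine once Proposition~\ref{prop:convG} is available.
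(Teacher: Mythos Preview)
Your proposal is correct and follows the same overall architecture as the paper: use Proposition~\ref{prop:convG} plus continuous mapping for the basic convergence $R^{2,\ast}_\gamma\to_P R^2_{\gamma,\infty}$, then treat the three cases separately. The differences are mostly in packaging. For the non-nested strict inequality you argue via the best-linear-predictor interpretation and positive-definiteness of $S_1$; the paper gets the same thing from the finite-sample projection inequality $\beta_T'V_T'P_{V_\gamma}V_T\beta_T<\beta_T'V_T'V_T\beta_T$ and passes to the limit. For the null case your rate-counting argument $(O_p(\sqrt n)\cdot O_p(1/n)\cdot O_p(\sqrt n))/\Theta_p(n)$ matches the paper's, except that the paper makes the $O_p(\sqrt n)$ bound on the non-private cross term explicit via the Hanson--Wright inequality applied to $Z'VV'Z/n$.

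The one genuine methodological difference is in the nested case. You work entirely inside $G^{\ast\ast}_r$ with Schur complements, writing $\mathrm{RSS}^\ast_T-\mathrm{RSS}^\ast_\gamma=(G^{\ast\ast}_r)_{ZS\mid T}[(G^{\ast\ast}_r)_{SS\mid T}]^{-1}(G^{\ast\ast}_r)_{SZ\mid T}$ and bounding the pieces. The paper instead writes the non-private $(1-R^2_T)/(1-R^2_\gamma)$ in terms of $Z'(P_{V_\gamma}-P_{V_T})Z$, bounds that quadratic form in $O_p(1)$ directly by Hanson--Wright (since $P_{V_\gamma}-P_{V_T}$ is idempotent with bounded trace), and then argues that the private version inherits the same behavior because the noise in $(V'Z)^\ast/\sqrt n$ and $(V'V)^\ast/n$ vanishes. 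Your $(1+t)\le e^t$ step is exactly the inequality the paper uses, and your noiseless Schur block $V_S'(I-P_{V_T})Z$ is the same object as the paper's quadratic form in different coordinates. What your route buys is a cleaner algebraic decomposition that isolates the perturbation; what the paper's route buys is a one-line concentration bound via Hanson--Wright without having to open up the inner inverse. On the point you flag as the main obstacle (propagation of $E$ and $rI$ through the Schur block), the paper is no more explicit than you are: it simply asserts that the private $(Z'(P_{V_\gamma}-P_{V_T})Z)^\ast$ ``has the same asymptotic behavior'' as the non-private version, referring back to the null-model argument. So your identification of this as the only genuinely quantitative step is accurate, and your level of detail there matches the paper's.
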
 

\begin{proof} We prove the three statements separately.

\textbf{Proof of statement 1.} The proofs of $R^2_\gamma \rightarrow_P R^2_{\gamma, \infty}$ and $R^{2, \ast}_{\gamma} \rightarrow_P R^2_{\gamma, \infty}$ are a direct consequence of Proposition~{\ref{prop:convG}}. Note that 
$$
R^2_{\gamma} = \frac{Z'V_\gamma (V_\gamma'V_\gamma)^{-1} V_\gamma' Z}{Z'Z} = \frac{Z' P_{V_{\gamma}} Z/n}{Z'Z/n}.
$$
On the one hand, $n/Z'Z \rightarrow_P 1/(\sigma^2_T + \beta_T' S_3 \beta_T)$. Then, $V_\gamma' Z$ are subvectors of $V'Z$, so $V_{\gamma}'Z/n$ converges to a subvector of $S_2 \beta_T$. Similarly, $V_{\gamma}' V_{\gamma}/n$ converges to a submatrix of $S_1$ and, since we assume that $V$ is full-rank, $V_{\gamma}'V_{\gamma}$ is invertible for any $\gamma$. This implies that $R^2_{\gamma}$ converges in probability to some constant $R^{2}_{\gamma, \infty}$, which has to be between zero and one because $V_\gamma (V_\gamma'V_\gamma)^{-1} V_\gamma$ is a projection matrix and $Z' P_X Z \le Z'Z$ for any projection matrix $P_X$.

The convergence of the noisy $R^{2, \ast}_{\gamma}$ to $R^{2}_{\gamma, \infty}$ can be established after noting that $R^{2, \ast}_{\gamma}$ can be constructed by taking submatrices of $G^\ast_r$ or $G^{\ast \ast}_r$ and multiplying and dividing terms as needed. We can invoke Proposition~\ref{prop:convG} and Slutzky's lemma and conclude that $R^{2, \ast}_{\gamma} \rightarrow_P R^2_{\gamma, \infty}$, as required. 

It remains to show that if $\gamma$ does not nest the true model $T$, $R^{2}_{\gamma, \infty} < R^2_{T, \infty}$. To see this, note that  
$$
\lim_{n \rightarrow \infty} \mathbb{E}(Z' P_{V_{\gamma}} Z/n) = \lim_{n \rightarrow \infty} \beta_T' X_T' P_{V_{\gamma}} X_T \beta_T/n, \, \, \lim_{n \rightarrow \infty} \mathrm{Var}(Z'P_{V_{\gamma}} Z/n) = 0,
$$
and $\beta_T' X_T' P_{V_{\gamma}} X_T \beta_T < \beta_T' X_T' X_T \beta_T  = \beta_T' X_T' P_{V_{T}} X_T \beta_T$, which implies $R^2_{\gamma, \infty} < R^2_{T, \infty}.$

\textbf{Proof of statement 2.} If $T$ is the null model, we show that $n R^{2, \ast}_{\gamma}$ is in $O_p(1)$. It is useful to write 
$$
n R^{2, \ast}_{\gamma} = \frac{\frac{1}{\sqrt{n}} (Z'V)^{\ast}_{\gamma} \left[ \frac{(V'V)^{\ast}_{\gamma}}{n}\right]^{-1} (V'Z)^{\ast}_{\gamma} \frac{1}{\sqrt{n}}}{(Z'Z)^{\ast}/n}.
$$
By Proposition~\ref{prop:convG}, we know that the denominator converges in probability to a constant. It is enough to show that the numerator is in $O_p(1)$. The matrix $\left[ {(V'V)^{\ast}_{\gamma}}/{n}\right]^{-1}$ is in $O_p(1)$ by Proposition~\ref{prop:convG}. It remains to show that $(V'Z)^{\ast}_{\gamma}/{\sqrt{n}}$ is in $O_p(1)$. Let $w^\ast = (V'Z)  + E_{2}$ and $k_{\lambda}$ the appropriate threshold for the off-diagonal elements of $G^{\ast \ast}$. Then,
\begin{align*}
    \lVert \frac{1}{\sqrt{n}} (V'Z)^\ast_{\gamma} \rVert^2 &\le \frac{1}{n} \lVert  w^\ast \, \mathbbm{1}(|w^\ast| > k_{\lambda}) \rVert^2 \\
    &\le \frac{1}{n} \lVert  w^\ast \rVert^2 \\
    &\le \frac{1}{n} \lVert V'Z \rVert^2 + \frac{1}{n} \lVert E_{2} \rVert^{2}.
\end{align*}
It  suffices to show that both the error $E_{2}/\sqrt{n}$ and the non-private $V'Z/\sqrt{n}$ are in $O_p(1)$. Each of the $E_{ij}$ in $E$ has a finite variance that does not depend on $n$. Therefore, each $E_{ij}/\sqrt{n}$ has a variance that goes to zero and, since $E_2$ has $p$ elements, $\lVert E_{2}/\sqrt{n} \rVert$ is in $O_p(1)$. It only remains to show that $V'Z/\sqrt{n}$ is in $O_p(1)$, which we show using the Hanson-Wright inequality. 

First, note that $
\lVert V'Z/\sqrt{n} \rVert^{2} = Z' VV' Z/n$. Then, since we are assuming that the true model is the null model:
$$
\mathbb{E}(Z'VV'Z/n) = \frac{\sigma^2}{n} \mathrm{tr}(V'V) \rightarrow_{n \rightarrow \infty} c > 0.
$$
The limit is a positive constant because $V'V/n$ converges to a symmetric positive-definite matrix by Assumption 5, which also implies $\lim_{n \rightarrow \infty} \lVert VV'/n \rVert_{F} < \infty$ and  $\lim_{n \rightarrow \infty} \lVert VV'/n \rVert_{op} < \infty.$ From here, we can apply the Hanson-Wright inequality to establish that $|Z'VV'Z/n - \mathbb{E}(Z'VV'Z/n)|$ is in $O_p(1)$ and, since $ \mathbb{E}(Z'VV'Z/n)$ converges to a constant, we conclude that $Z'VV'Z/n$ is in $O_p(1)$, as required.

\textbf{Proof of statement 3.} {Finally, we show that if $\gamma$ nests $T$, $[(1-R^{2, \ast}_T)/(1-R^{2,\ast}_\gamma)]^{(n-p_0)/2}$ is in $O_p(1)$. Consider the non-private $[(1-R^{2}_T)/(1-R^{2}_\gamma)]^{(n-p_0)/2}$}. We can write 
\begin{align*}
\left[\frac{1-R^{2}_T}{1-R^{2}_\gamma}\right]^{(n-p_0)/2} &= \left[1 + \frac{2}{n-p_0}  \frac{Z'(P_{V_{\gamma}} - P_{V_T} )Z}{2 Z'(I_n - P_{V
_{\gamma}}) Z/(n-p_0)} \right]^{(n-p_0)/2} \\
&\le \exp\left(\frac{Z'(P_{V_{\gamma}} - P_{V_T} )Z}{2 Z'(I_n - P_{V
_{\gamma}}) Z/(n-p_0)} \right).
\end{align*}
The denominator $Z'(I_n - P_{V
_{\gamma}}) Z/(n-p_0)$ converges to a constant. This fact follows directly given the asymptotic behavior of $Z' P_{V_{\gamma}} Z/n$ and $Z'Z/n$ we just described in the proof of the first statement. The same is true for the private version of the statistic, using the argument we used for $R^{2, \ast}_{\gamma}$. 

The numerator $Z'(P_{V_{\gamma}} - P_{V_T} )Z$ is in $O_p(1)$, which can be shown using the Hanson-Wright inequality. The expectation is $\mathbb{E}[Z'(P_{V_{\gamma}} - P_{V_T} )Z] = \sigma^2 (|\gamma| - |T|)$. Let $Q_{\gamma} = Z'(P_{V_{\gamma}} - P_{V_T} )Z - \sigma^2 (|\gamma| - |T|)$. Applying the Hanson-Wright inequality
$$
P( Q_{\gamma} > M) \le 2 \exp\left\{ - c \frac{M}{K^2} \min\left( \frac{M}{K^2 (|\gamma| - |T|)}, 1 \right) \right\},
$$
where $K$ is a constant that can be chosen in a similar way as we did in Proposition~\ref{prop:convG}. The right-hand side can be made arbitrarily close to zero by increasing $M$, which establishes that $Q_{\gamma}$ and $Z'(P_{V_{\gamma}} - P_{V_T}) Z$ are in $O_p(1)$. The same is true for the private version of the statistic, using an argument which is similar to the one we used for showing that $n R^{2, \ast}_{\gamma}$ is in $O_p(1)$ when the true model is the null model. Both  $(V'Z)^\ast_{\gamma}/\sqrt{n}$ and $(V'Z)^\ast_{T}/\sqrt{n}$ converge to their private versions $(V'Z)_{\gamma}/\sqrt{n}$ and $(V'Z)_{T}/\sqrt{n}$ because the error goes to zero and the indicator converges (see Proposition~\ref{prop:convG} and the earlier proof for $n R^{2, \ast}_{\gamma}$ for more detailed versions of these arguments). The private $(V'V)_{\gamma}^{\ast}$ also converge to their non-private versions (see e.g. Proposition~\ref{prop:convG}). Therefore, the private $(Z'(P_{V_{\gamma}} - P_{V_T})Z)^{\ast}$ has the same asymptotic behavior as the private $Z'(P_{V_{\gamma}} - P_{V_T})Z,$ which we have shown to be in $O_p(1)$. This completes the proof.


\end{proof}

\subsection{Proof of Proposition 2 in main text} 

We consider two cases: one where the true model is the null model and another one where the true model is not the null model.

\textbf{True model is the null model:} Let $\gamma$ be a model that is not the null model. Then,  we have
$$
B^{\ast}_{\gamma 0} \le \exp\left( \frac{n-p_0}{2} \frac{R^{2,\ast}_{\gamma}}{1-R^{2,\ast}_{\gamma}} \right) \int_{0}^\infty (g+1)^{-|\gamma|/2} \, \pi(g) \, \, \nu(\mathrm{d} g).
$$
The integral converges to zero by Assumption 6 and the exponential term is in $O_p(1)$ because, by Proposition~\ref{prop:noisyR2}, we know that $n R^{2,\ast}_{\gamma}$ is in $O_p(1)$.  Therefore, for any model $\gamma$ which is not the null model, $B^\ast_{\gamma0} \rightarrow_P 0$. A similar argument works for information criteria. In such case,
$$
I^\ast_{\gamma 0} \le n^{-\rho_{|\gamma|}/2} \exp\left( \frac{n}{2} \frac{R^{2,\ast}_{\gamma}}{1-R^{2,\ast}_{\gamma}} \right)\rightarrow_P 0.
$$

\textbf{True model is not the null model:} We study the asymptotic behavior of $B^\ast_{\gamma T}$, where $T$ is the true model and $\gamma$ is a model that is not the true model. We split this case into two subcases: one where $\gamma$ nests the true model, and another one where $\gamma$ does not nest the true model. 

First, note that, since we are working with null-based Bayes factors,
$$
B^{\ast}_{\gamma T} = \frac{B^{\ast}_{\gamma 0} (1 - R^{2,\ast}_T)^{(n-p_0)/2}}{ B^{\ast}_{T0} (1 - R^{2, \ast}_T)^{(n-p_0)/2}}  
$$
we will bound the numerator and denominator separately and put our bounds together. First, we bound the numerator:
\begin{align*}
    B^{\ast}_{\gamma 0} (1 - R^{2,\ast}_T)^{(n-p_0)/2} &= \int_{0}^\infty (g+1)^{-|\gamma|/2} \left[\frac{1 +g (1-R^{2,\ast}_T) -R^{2, \ast}_T}{1 + g(1-R^{2,\ast}_{\gamma})}  \right]^{(n-p_0)/2} \, \pi(g) \, \nu(\mathrm{d} g) \\
    &\le \left( \frac{1-R^{2,\ast}_{T}}{1-R^{2,\ast}_{\gamma}} \right)^{(n-p_0)/2} \, \int_{0}^\infty (g+1)^{-|\gamma|/2} \, \pi(g) \, \nu(\mathrm{d} g).
\end{align*}
Then, we bound the denominator:
\begin{align*}
    B^{\ast}_{T0} (1 - R^{2, \ast}_T)^{(n-p_0)/2} &\ge \int_{n}^{\infty} (g+1)^{-|T|/2} \left[1- \frac{R^{2, \ast}_T}{1 + g(1-R^{2,\ast}_{T})}  \right]^{(n-p_0)/2} \, \pi(g) \, \nu(\mathrm{d} g) \\
    &\ge \left[1- \frac{R^{2, \ast}_T}{1 + n(1-R^{2,\ast}_{T})}  \right]^{(n-p_0)/2} \int_{n}^{\infty} (g+1)^{-|T|/2}  \, \pi(g) \, \nu(\mathrm{d} g).
\end{align*}
Putting the bounds together and using Assumption 6:
\begin{align*}
B^{\ast}_{\gamma T}  &\le \left(\frac{1-R^{2,\ast}_{T}}{1-R^{2,\ast}_{\gamma}}\right)^{(n-p_0)/2}  \left({1- \frac{R^{2, \ast}_T}{1 + n(1-R^{2,\ast}_{T})}}\right)^{-(n-p_0)/2} \, \frac{\int_{0}^\infty (g+1)^{-|\gamma|/2} \, \pi(g) \, \nu(\mathrm{d} g)}{\int_{n}^{\infty} (g+1)^{-|T|/2}  \, \pi(g) \, \nu(\mathrm{d} g)}  \\
&\lesssim n^{(|T|-|\gamma|)/2}\left(\frac{1-R^{2,\ast}_{T}}{1-R^{2,\ast}_{\gamma}}\right)^{(n-p_0)/2}  \exp(R^{2,\ast}_{\gamma}/(1-R^{2,\ast}_{\gamma})).
\end{align*}
When $\gamma$ nests $T$, $n^{(|T|-|\gamma|)/2}$ goes to zero and the remaining terms are in $O_p(1)$ by Proposition~\ref{prop:noisyR2}, so $B^{\ast}_{\gamma T}$ converges to zero in probability. When $\gamma$ does not nest $T$, we show that
$$
A_n = n^{(|T| - |\gamma|)/2}[({1-R^{2,\ast}_{T}})/({1-R^{2,\ast}_{\gamma}})]^{(n-p_0)/2} \rightarrow_P 0
$$ 
Let $R_n = ({1-R^{2,\ast}_{T}})/({1-R^{2,\ast}_{\gamma}})$, which converges in probability to a constant less than one. Taking logarithms
$$
\log \left\{ n^{(|T| - |\gamma|)/2} R_n^{(n-p_0)/2} \right\} = \frac{n-p_0}{2}\left[(|T| - |\gamma|) \log n/(n-p_0) + \log R_n \right],
$$
which diverges to $-\infty$ in probability, so $A_n \rightarrow_P 0$  The remaining term in the upper bound for $B^\ast_{\gamma T}$ is in $O_p(1)$. Therefore, we have shown that $B^\ast_{\gamma T} \rightarrow_P 0$. 

The proof for information criteria is essentially the same, but we do not need to bound integrals. Indeed,
$$
I^\ast_{\gamma T} = n^{(\rho_{|T|} - \rho_{|\gamma|})/2} \left(\frac{1-R^{2,\ast}_{T}}{1-R^{2,\ast}_{\gamma}}\right)^{n/2},
$$
and we can use the same arguments we used for $B^\ast_{\gamma T}$ to show that $I^\ast_{\gamma T}$ is consistent.

\subsection{Proof of Proposition 3 in main text}

Let $\mathcal{U} \in \mathbbm{R}^{n \times (p+1)}$ be a full-rank matrix and define $\mathcal{M} = (I_n-{P}_{X_0}) \mathcal{U}$. Given a Gram matrix $\mathcal{G}$, we can generate a synthetic dataset $\mathcal{D} = [\mathcal{V}; \mathcal{Z}]$ with the formula $\mathcal{D} = \mathcal{M} (\mathcal{M}'\mathcal{M})^{-1/2} \mathcal{G}^{1/2}$. In other words, we have $\mathcal{D}'\mathcal{D} = \mathcal{G}$:

$$
\mathcal{D}'\mathcal{D} =  \mathcal{G}^{1/2} \mathcal{M} (\mathcal{M}'\mathcal{M})^{-1/2} \mathcal{M}' \mathcal{M} (\mathcal{M}'\mathcal{M})^{-1/2} \mathcal{G}^{1/2} = \mathcal{G}^{1/2} \mathcal{G}^{1/2} = \mathcal{G}.
$$

The synthetic data are also centered (the same way that $V$ and $Z$ are centered in our construction of $D$). This is true because $\mathcal{D}$ is pre-multiplied by $I_n-P_{X_0}$, so it is orthogonal to the span of $X_0 = 1_n$.

\section{Effects of censoring and data-splitting} \label{suppl:hsb2cens}

In this section, we revisit the High School and Beyond Survey data  set (Section~\ref{subsec:hsb2}) to study the effects of setting different censoring limits. For concreteness, we restrict our attention to the test of $H_{02}$ against $H_{12}$; that is, the hypothesis test where we wish to know if \texttt{read} scores are predictive of \texttt{math} scores when \texttt{science} scores are already a covariate in the model. The results for the test of $H_{01}$ against $H_{11}$ are similar. The simulation setup is the same as described in Section~\ref{suppl:hsb2cens}. The only changes are the censoring limits.

In Figure~\ref{fig:postprobs_cens}, we display the results for the differentially private posterior probability of $H_{12}$ for different values of $\varepsilon$ and $M$. We censor the posterior probability of $H_{12}$ at $[0.35, 0.65]$, $[0.25, 0.75]$, $[0.01, 0.99]$, and $[0.001, 0.999]$. The true, non-private posterior probability of $H_{12}$ is near 1. For all censoring limits, increasing the number of subgroups decreases the variance of the output, but it induces a bias that shrinks the probability to $0.5$. More stringent censoring limits reduce the variance as well, but come at the cost of potential bias: for instance, censoring the posterior probability at $[0.35, 0.65]$ is clearly too stringent, since the true posterior probability is much higher than the upper limit.

In Figure~\ref{fig:LR_cens}, we display the analogous result for likelihood ratios.  In this case, the lower censoring limit is set to $L = 0$, which is a natural lower bound for likelihood ratios. The gray $+$ and $\vartriangle$ are calibrated critical values for rejection of $H_{02}$ at significance at levels 0.01 and 0.05, respectively. The upper censoring limits $U$ are set to 1, 2, 7, and 10. We arrive at the same conclusions we reached with the Bayesian analysis. The true, non-private likelihood ratio is above 10, and if we censor at a much lower value (such as 1 or 2) the test fails to be powerful. The test performs best if the lower bound is 10, in which case the test is quite powerful, especially for $M \ge 5$ and $\delta \ge 0.25$.

\begin{figure}[h!]
    \centering
    \includegraphics[width=0.75\textwidth]{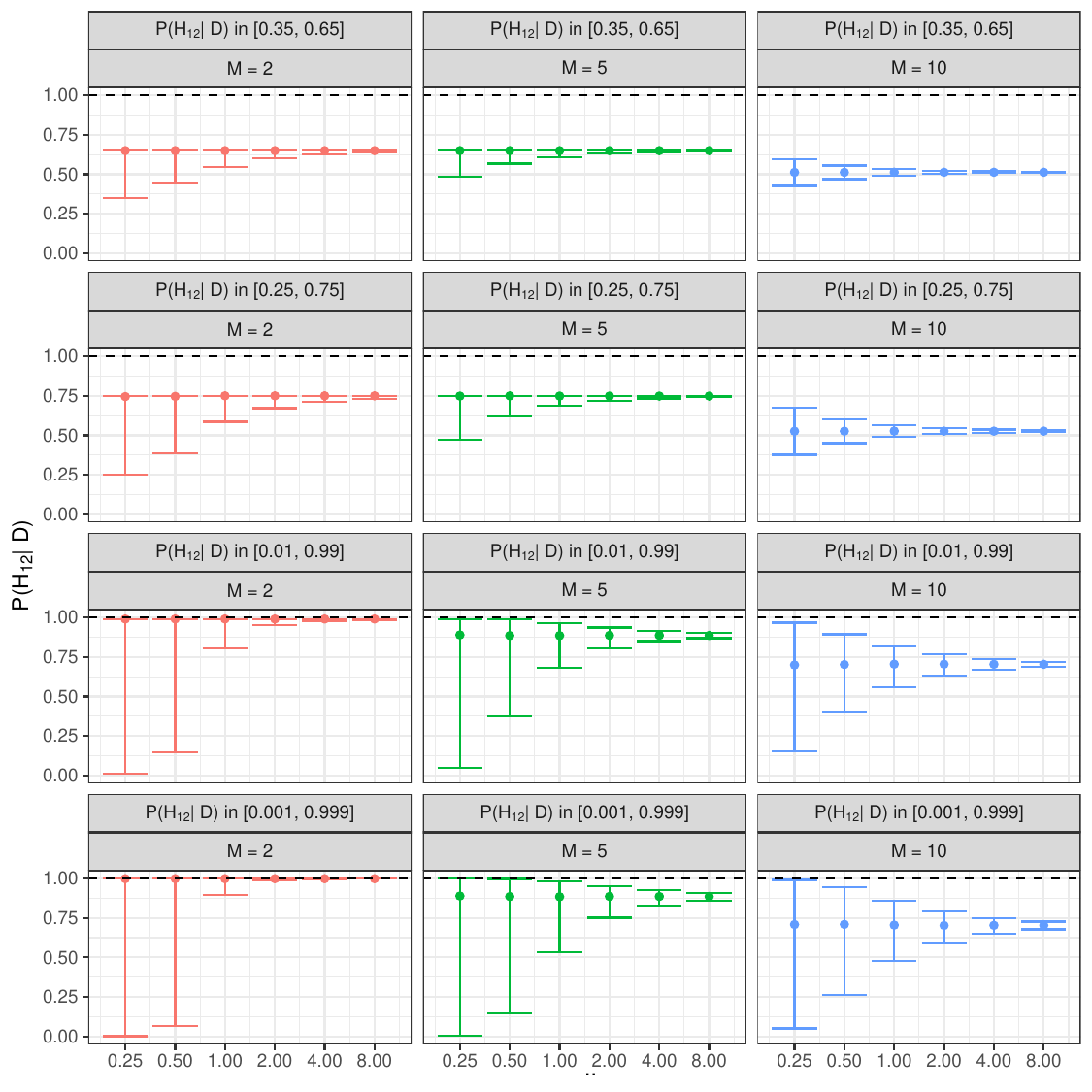}
  \caption{ Distribution of $P^\ast(H_{12} \mid D)$  as a function of $\varepsilon$, $M$, and censoring limits. The lower endpoint of the error bars is the first quartile of the distribution, the midpoint is the median, and the upper endpoint is the third quartile. The dashed lines are the non-private posterior probabilities.}
 \label{fig:postprobs_cens}
\end{figure}

\begin{figure}[h!]
    \centering
    \includegraphics[width=0.75\textwidth]{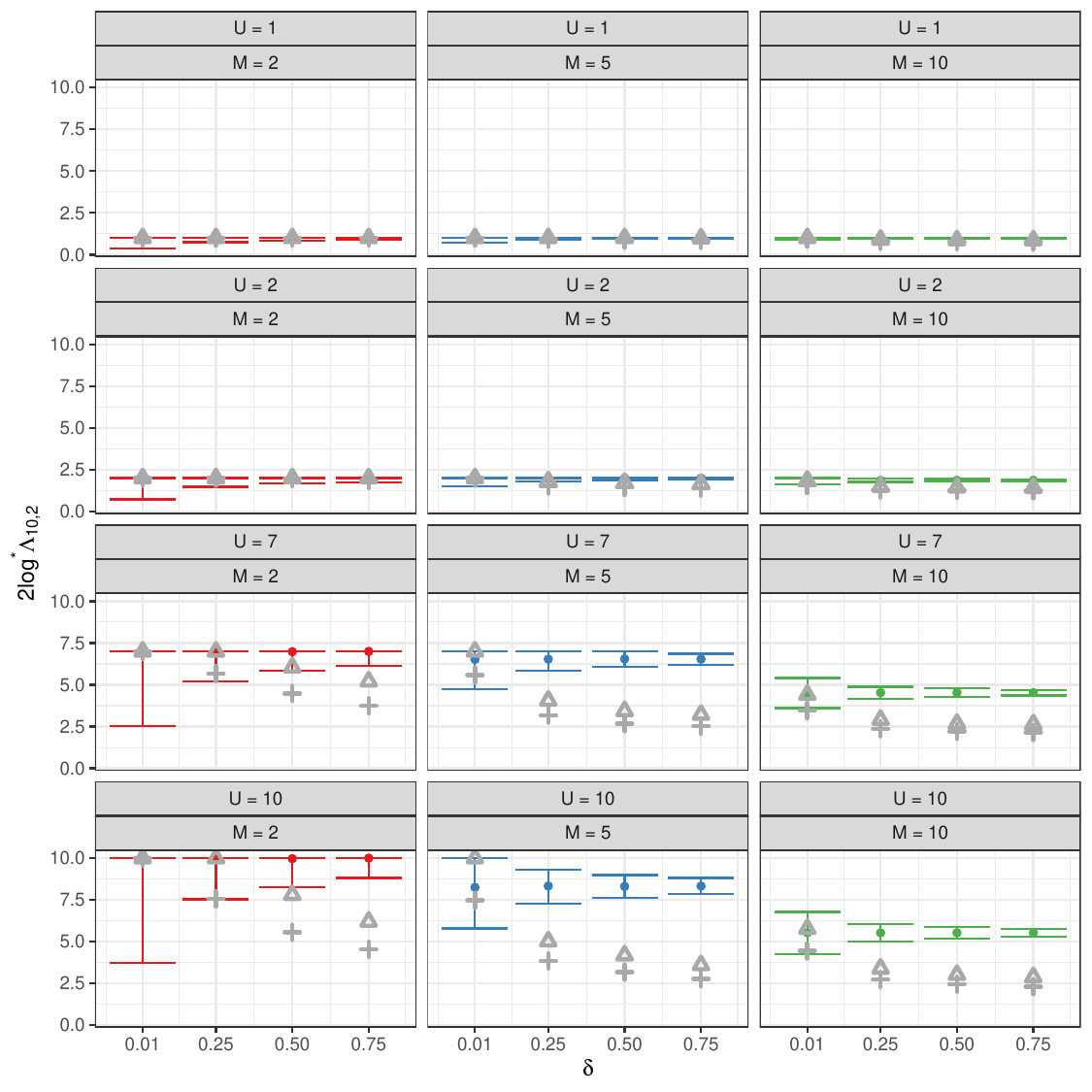} 
    \caption{Distribution of  $2 \log \Lambda^{\ast }_{10, 1}$  and $2 \log \Lambda^{\ast }_{10, 2}$ as a function of $\delta$, $M$, and censoring upper limit $U$. The gray $+$ and $\vartriangle$ are corrected critical values at the 0.01 and 0.05 significance levels, respectively.}
    \label{fig:LR_cens}
\end{figure}

\color{black}

\section{Additional Plots for Simulation Study}\label{suppl:plots}

In this section, we include additional plots for the simulation study in Section~\ref{subsec:empeval}. We show results with BIC combined with least-squares estimates, as well as results for additional values of $\varepsilon$ that were not included in the main text. The interpretation of the plots is the same: thresholded methods perform best when $|T|$ is small, and non-thresholded methods perform best when $|T|$ is large.

\begin{figure}[h!]
  \centering
  \includegraphics[scale= 0.55]{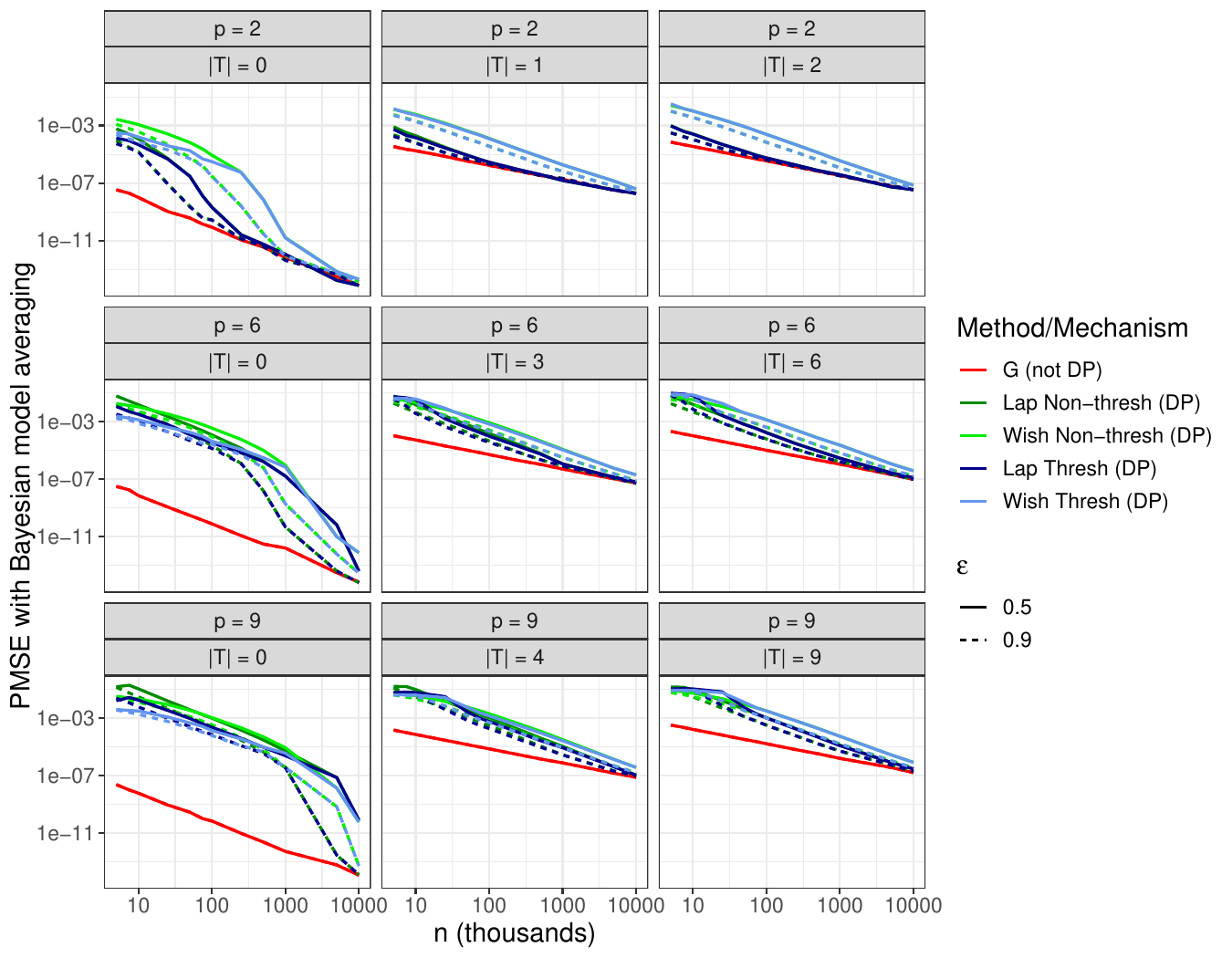}
  \caption{Simulation study: Sample size ($x$-axis) against log(PMSE) ($y$-axis) with BIC prior.}
 \label{fig:log_MSE_simulation_BIC}
\end{figure}

\begin{figure}[h!]
  \centering
  \includegraphics[scale= 0.55]{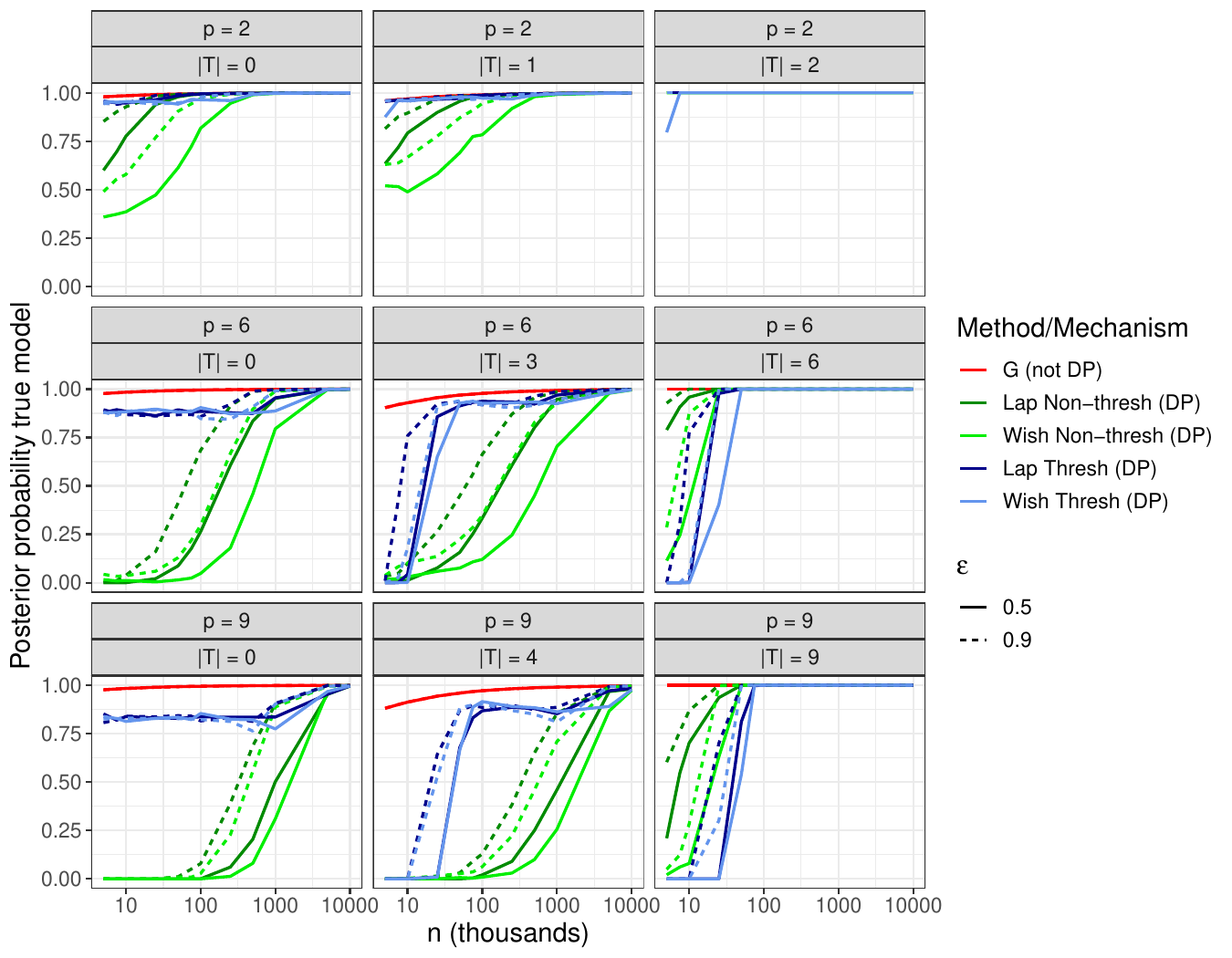}
 \caption{Simulation study: Sample size ($x$-axis) against posterior probability of the true model ($y$-axis) with BIC prior. }
\label{suppl:fig:Inclusion_Prob_BIC}
\end{figure}

\end{appendix}

\end{document}